\tikzstyle{line}=[draw]
\providecommand{\algorithmname}{Algorithm}
\theoremstyle{remark}
\newtheorem{thm}{Theorem}
\newtheorem{lem}{Lemma}
\newtheorem{defn}{Definition}
\newtheorem{exmp}{Example}
\newtheorem{rem}{Remark}
\title{Groupcast Index Coding Problem: Joint Extensions}
\author{Chinmayananda Arunachala, and B. Sundar Rajan. \thanks{C. Arunachala and B. S. Rajan are with the Dept.
		of Electrical Communication Engg., Indian Institute of Science, Bengaluru
		560012, KA, India, email: \{chinmayanand,bsrajan\}@iisc.ac.in.}}
\begin{document}
\maketitle
\begin{abstract}
The groupcast index coding problem is the most general version of the classical index coding problem, where any receiver can demand messages that are also demanded by other receivers. Any groupcast index coding problem is described by its \emph{fitting matrix} which contains unknown entries along with $1$'s and $0$'s. The problem of finding an optimal scalar linear code is equivalent to completing this matrix with known entries such that the rank of the resulting matrix is minimized. Any row basis of such a completion gives an optimal \emph{scalar linear} code. An index coding problem is said to be a joint extension of a finite number of index coding problems, if the fitting matrices of these problems are disjoint submatrices of the fitting matrix of  the jointly extended problem. In this paper, a class of joint extensions of any finite number of groupcast index coding problems is identified, where the relation between the fitting matrices of the sub-problems present in the fitting matrix of the jointly extended problem is defined by a base problem. A lower bound on the \emph{minrank} (optimal scalar linear codelength) of the jointly extended problem is given in terms of those of the sub-problems. This lower bound also has a dependence on the base problem and is operationally useful in finding lower bounds of the jointly extended problems when the minranks of all the sub-problems are known. We provide an algorithm to construct scalar linear codes (not optimal in general), for any groupcast problem belonging to the class of jointly extended problems identified in this paper. The algorithm uses scalar linear codes of all the sub-problems and the base problem. We also identify some subclasses, where the constructed codes are scalar linear optimal. 
\end{abstract}

\section{Introduction}
The index coding problem (ICP) introduced in \cite{BK} is a source coding problem with some side-information present at the receivers. The sender broadcasts coded messages leveraging the knowledge of the side-information present at all the receivers, in order to reduce the number of transmissions required for all the receivers to decode their demanded messages. This problem  is also related to topological interference management problem in wireless networks \cite{jafar}. It also has applications in satellite communications where some users  want to exchange their messages using a satellite \cite{SUOH}, and the retransmission phase of downlink networks \cite{LNSG} among many others. In general, the ICP is NP-Hard. Optimal codelengths and optimal codes were given for some special classes of the ICP \cite{SUOH}, \cite{MV}. Many works address the  single unicast ICP (SUICP) where each receiver demands a unique message \cite{LS}, \cite{DSC1}. The most general class of the ICP which subsumes SUICP is the groupcast ICP  where any receiver can demand messages which are also demanded by other receivers.

The groupcast ICP was first studied in \cite{LS} where upper and lower bounds on the optimal codelength were given for any groupcast problem in terms of the optimal codelengths of two related SUICPs. In \cite{DIC}, a directed bipartite graph representation was introduced for the groupcast problem and capacity region was found when only particular coding schemes were allowed. The groupcast problem was represented as a directed hypergraph and bounds on the optimal broadcast rate were given in \cite{ALSW}. In \cite{SKS}, optimal scalar codelengths were obtained for a class of the groupcast problem, where each message is demanded by at most two receivers. The results are obtained based on the optimality of linear coding schemes for a related SUICP. 

Characterisation of the optimal codelengths of SUICPs in terms of those of its sub-problems has been carried out in many works \cite{MV}, \cite{RKMV}-\cite{FYGL}. A lifting construction was presented in \cite{RKMV}, where a special class of SUICPs were obtained from another class of SUICPs. The  optimal scalar linear  codelength of the larger derived SUICP has been shown to be equal to that of the smaller SUICP. Optimal vector linear codes for a class of SUICPs were constructed using optimal scalar linear codes of other basic SUICPs in \cite{MV}. Graph homomorphism between complements of the side-information digraphs of two given SUICPs was used to establish a relation between their optimal codelengths \cite{JS}. Some special classes of rank invariant extensions of any SUICP were presented in \cite{PK}, where the extended problems have the same optimal linear codelength as that of the original SUICP, generalizing the results of \cite{RKMV}. The notion of rank invariant extensions was extended to a class of joint extensions of any finite number of SUICPs in \cite{CBSR}. Two-sender SUICPs with a sub-problem being a joint extension of two SUICPs were solved for optimal scalar linear codelengths using those of the component single-sender SUICPs (sub-problems) \cite{CBSR}. In \cite{FYGL}, capacity region of SUICPs with side-information digraphs being  generalized lexicographic products of side-information digraphs of the component SUICPs was characterized in terms of those of the component SUICPs.

In this paper, we identify a class of joint extensions of a finite number of groupcast ICPs, where the relation between the sub-problems in the jointly extended problem is defined by a base problem. Optimal scalar linear codelength and optimal codes of the jointly extended problem are given in terms of those of the sub-problems and the base problem for a special class of the jointly extended problem introduced in this paper. This result generalizes the class of joint extensions solved in \cite{CBSR}. When the base problem and all the sub-problems are restricted to SUICPs, the class of jointly extended groupcast problems identified in this paper reduces to the class of SUICPs with the  side-information digraphs being generalized lexicographic products of the component side-information digraphs \cite{FYGL}. Viewing any jointly extended problem from a matrix-completion perspective, the class of jointly extended problems solved in this paper extends the notion of generalized lexicographic products where any number of sub-problems can be groupcast ICPs.

The key results of this paper are summarized as follows. 
\begin{itemize}
    \item A class of joint extensions is identified which extends the notion of generalized lexicographic product of side-information digraphs which has been  defined with the sub-problems being SUICPs \cite{FYGL}, to the case where the sub-problems can be groupcast problems. The positions of the fitting matrices of the sub-problems (in that of the extended problem given in this work), are given by the entries of the fitting matrix of another ICP called the base problem.  
	\item  A lower bound on the \emph{minrank} (optimal scalar linear codelength) is given for the class of jointly extended problems identified in this paper, in terms of those of the sub-problems and the base problem.
    \item A code construction (not necessarily optimal) is presented for a sub-class of the class of jointly extended problems, based on codes of the base problem and all the sub-problems. A set of necessary conditions are provided for the constructed codes to be optimal.
    \item  An algorithm to construct a scalar linear code using any given scalar linear codes of the sub-problems and the base problem is given. The constructed code need not be scalar linear optimal, even when all the related codes are scalar linear optimal. This is the first work (to the best of our knowledge) where deterministic/explicit codes are constructed  for a bigger groupcast problem using those of many smaller groupcast sub-problems. A subclass of joint extensions with the constructed codes being optimal scalar linear codes is identified.
\end{itemize}
\par The remainder of the paper is organized as follows. Section II introduces the problem setup and establishes the notations and  definitions used in this paper. Section III contains our initial results on jointly extended problems presented in the first version of this paper. Section IV presents improved results. We provide an algorithm to obtain scalar linear codes for the class of jointly extended problems identified in this paper. Section V identifies a subclass of jointly extended problems, where the algorithm provides optimal scalar linear codes. Section VI concludes the paper with directions for future work.

\section{Problem Formulation and Definitions}

In this section, we establish the notations and definitions used in this paper and formulate a class of the groupcast index coding problem that can be seen as joint extensions of  smaller groupcast problems. 

Matrices and vectors are denoted by bold uppercase and bold lowercase letters respectively. For any positive integer $m$, $[m]  \triangleq \{1,...,m\}$. $\mathbb{F}_q$ denotes the finite field of order $q$. $\mathbb{F}_q^{n \times d}$ denotes the vector space of all $n \times d $ matrices over $\mathbb{F}_q$.

We first define the notion of disjoint submatrices of a given matrix, which is needed to define the class of jointly extended problems dealt in this paper, and provide related notations.
 
\begin{defn}[Disjoint submatrices of a given matrix]
A matrix ${\bf{L}}$ obtained by deleting some of the rows and/or some of the columns of ${\bf{M}}$ is said to be a submatrix of ${\bf{M}}$. The notation ${\bf{L}} \prec {\bf{M}}$ denotes that ${\bf{L}}$ is a submatrix of ${\bf{M}}$. The set containing the indices of columns of ${\bf{M}}$ present in ${\bf{L}}$ is denoted by $col({\bf{L}},{\bf{M}})$. Similarly, the set containing the indices of rows of ${\bf{M}}$ present in ${\bf{L}}$ is denoted by $row({\bf{L}},{\bf{M}})$. Indices of columns and rows are assumed to start from $1$. A set of submatrices of a given matrix are said to be disjoint, if no two of the submatrices have  elements indexed by the same ordered pair in the given matrix.
\end{defn}

The number of rows and columns of any matrix ${\bf{M}}$ are denoted by $\mathcal{R}({\bf{M}})$ and $\mathcal{C}({\bf{M}})$ respectively. The $(i,j)$th entry of matrix ${\bf{M}}$ is denoted as ${\bf{M}}_{i,j}$. The notation $\big[[{\bf{M}}]\big]_{i,j}$ denotes the $(i,j)$th component block matrix of ${\bf{M}}$, where the component block matrices of ${\bf{M}}$ (or equivalently the partition of ${\bf{M}}$  into component block matrices) are predefined by the construction of ${\bf{M}}$ using the same.  ${\bf{M}}_{[\mathcal{R}]}$ denotes the matrix formed by stacking the rows of ${\bf{M}}$ indexed by the elements in the set $\mathcal{R}$ in the ascending order of indices such that the row with the least row index forms the first row of ${\bf{M}}_{[\mathcal{R}]}$. For any matrix ${\bf{M}}$ over $\mathbb{F}_q$, the rank of ${\bf{M}}$ over $\mathbb{F}_q$ is denoted by $rk_{q}({\bf{M}})$. $\langle {\bf{M}} \rangle$ denotes the row space of ${\bf{M}}$. The transpose of ${\bf{M}}$ is denoted by ${\bf{M}}^{T}$.

We now define an upper-triangulable matrix which is frequently used in this paper. 
\begin{defn}[Upper-triangulable matrices] 
A permutation matrix ${\bf{P}}$ is a square matrix that has exactly one $1$ in each row and each column and $0$'s elsewhere. Any $p \times p$ permutation matrix ${\bf{P}}$ represents a permutation of $p$ elements. For a $p \times p$ matrix ${\bf{M}}_x$ containing unknown elements denoted by $x$ along with some known elements, ${\bf{P}}{\bf{M}}_x$ denotes the matrix obtained by applying the permutation described by ${\bf{P}}$ on the rows of ${\bf{M}}_x$. Similarly, ${\bf{M}}_x{\bf{P}}$ denotes the matrix obtained by applying the permutation described by ${\bf{P}}$ on the columns of ${\bf{M}}_x$. A $p \times p$ square matrix ${\bf{M}}$ is said to be upper-triangulable if there exists two $p \times p$ permutation matrices ${\bf{P}}$ and ${\bf{Q}}$ such that ${\bf{P}}{\bf{M}}{\bf{Q}}$ is an upper-triangular matrix. A matrix constructed using block matrices is called block upper-triangular, if the matrix obtained by replacing each block matrix by a scalar is upper-triangular. The block matrices can also be rectangular matrices. A $p \times p$ matrix ${\bf{M}}_x$ (with some unknown elements denoted by $x$) is said to be upper-triangulable if there exists two $p \times p$ permutation matrices ${\bf{P}}$ and ${\bf{Q}}$ such that ${\bf{P}}{\bf{M}}_x{\bf{Q}}$ is an upper-triangular matrix with all the diagonal entries being equal to $1$. The set of all  $p \times p$ upper-triangulable matrices containing entries from $\mathbb{F}_q$ and possible unknowns is denoted by $\mathcal{U}^p_q$. 
\end{defn} 
 
We now explain the groupcast index coding problem setup.
 
\par An instance of the groupcast index coding problem consists of a sender with $m$ independent messages given by  $\mathcal{M} =\{{\bf{x}}_1,{\bf{x}}_2,\cdots,{\bf{x}}_{m}\}$, 
where ${\bf{x}}_i \in \mathbb{F}_q^{d \times 1}$, $i \in [m]$, and $d \geq 1$. There are $n$ receivers. The $j$th receiver knows $\mathcal{K}_j \subset \mathcal{M}$ (also known as its side-information) and wants $\mathcal{W}_j \subseteq  \mathcal{M} \setminus \mathcal{K}_j$, $j \in [n]$. 
Each message is demanded by at least one receiver. Without loss of generality, throughout the paper we assume that  $|\mathcal{W}_j|=1, \forall j \in [n]$. For a receiver demanding more than one message, we replace it by as many new receivers as the number of messages demanded by the original receiver, with each new receiver demanding a unique message which was demanded by the original receiver and having the same side-information as that of the original receiver. Hence, we assume that the $j$th receiver wants ${\bf{x}}_{f(j)}$, $j \in [n]$, where the mapping $f : [n] \rightarrow [m]$ gives the index of the wanted message. Let $\mathcal{K}=(\mathcal{K}_1, \mathcal{K}_2, \cdots, \mathcal{K}_n)$. Hence, we can describe an instance of the groupcast ICP using the quadruple $(m,n,\mathcal{K},f)$. The transmission is through a noiseless broadcast channel which carries symbols from $\mathbb{F}_q$.

An index code over $\mathbb{F}_{q}$ for an instance of the groupcast ICP, described by $(m,n,\mathcal{K},f)$, is an encoding function $\mathbb{E} : \mathbb{F}_{q}^{md \times 1} \rightarrow   \mathbb{F}_{q}^{r \times 1}$ such that there exists a decoding function $\mathbb{D}_{j}:\mathbb{F}_{q}^{(r+|\mathcal{K}_{j}|d) \times 1} \rightarrow   \mathbb{F}_{q}^{d \times 1}$ at $j$th receiver $\forall j \in [n]$, with ${\bf{x}}_{f(j)} = \mathbb{D}_{j}(\mathbb{E}({\bf{x}}),\mathcal{K}_j)$ for any realizations of $\mathcal{K}_j$ and ${\bf{x}} = ({\bf{x}}_1|...|...|{\bf{x}}_m)^T$. The sender transmits $\mathbb{E}(\bf{x})$ with codelength $r$. The smallest possible value of $r$ is called the optimal codelength of the problem. If the encoding function is linear, the index code is given by $\bf{G}\bf{x}$, where ${\bf{G}} \in \mathbb{F}^{r \times md}_{q}$ is called the encoding matrix for the given index code. With the encoding function being linear, if $d=1$, the code is said to be scalar linear, else it is said to be vector linear. In this paper, we only consider scalar linear codes. If $n=m$, the index coding problem (ICP) is called single unicast ICP (SUICP). For an SUICP, without loss of generality, we assume that the $j$th receiver wants ${\bf{x}}_j$, $j \in [n]$.  

Any groupcast ICP can be represented using a fitting matrix which was introduced in \cite{BY} and was defined again in \cite{PK} to include the groupcast problem. It contains unknown entries denoted by $x$. Each row of the fitting matrix represents a receiver and  each column represents a message.
  
\begin{defn}[Fitting Matrix, \cite{PK}]
	An $n \times m$ matrix ${\bf{F}}_x$ is called the fitting matrix of an ICP described by $(m,n,\mathcal{K},f)$, where the ($i,j$)th entry is given by 
	\[ 
	[{\bf{F}}_x]_{i,j}=
	\begin{cases}
	x               & if \ {\bf{x}}_j \in \mathcal{K}_i,\\
	1               & if \ j = f(i),\\
	0               & otherwise.
	\end{cases}
	\]
	$\forall$ $i \in [n]$, and $j \in [m]$.
\end{defn}

\par The minimum rank of ${\bf{F}}_x$ obtained by replacing the $x$'s in ${\bf{F}}_x$ with arbitrary values from $\mathbb{F}_q$ is called the minrank of ${\bf{F}}_x$ or that of the ICP described by $(m,n,\mathcal{K},f)$ over $\mathbb{F}_q$. It has been shown in \cite{DSC} that the optimal codelength of any scalar linear code over $\mathbb{F}_q$ is equal to the minrank of ${\bf{F}}_x$ over $\mathbb{F}_q$, denoted as $mrk_{q}({\bf{F}}_x)$. We say ${\bf{F}} \approx {\bf{F}}_x$ (${\bf{F}}$ completes ${\bf{F}}_x$ or equivalently ${\bf{F}}$ is a completion of ${\bf{F}}_x$), if ${\bf{F}}$ is obtained from ${\bf{F}}_x$ by replacing all the unknown elements by arbitrary elements from the given field of interest.

The notion of joint extensions of any finite number of SUICPs was introduced in \cite{CBSR}. We extend the definition to include joint extensions of any finite number of groupcast ICPs.

\begin{defn}[Joint Extension]
Consider $l$ ICPs where the $i$th ICP $\mathcal{I}_i$ is described using the fitting matrix ${\bf{F}}_x^{(i)}$, $i \in [l]$. An ICP $\mathcal{I}_{E}$ whose fitting matrix is given by ${\bf{F}}^{E}_{x}$ is called a jointly extended ICP (or simply a joint extension of $l$ ICPs), extended using ICPs $\mathcal{I}_1,...,\mathcal{I}_l$, if 
${\bf{F}}^{E}_{x}$ consists of all ${\bf{F}}_x^{(i)}$'s, $i \in [l]$, as its disjoint submatrices. The $l$ ICPs are called as the component problems (or sub-problems) of the jointly extended problem.
\end{defn}

In this paper, we study a special class of joint extensions of $m_B$ groupcast ICPs described as follows. Let the ICP $\mathcal{I}_B$ described by the $n_B \times m_B$ fitting matrix ${\bf{F}}_x^{B}$, be called the base problem. Let $l_j$ denote the number of occurrences of  $1$ in the $j$th column of ${\bf{F}}_x^{B}$, $j \in [m_B]$. The superscript and subscript $``B"$ stands for the base problem. Let the $i$th component ICP $\mathcal{I}_i$ be described by the $n_i \times m_i$ fitting matrix ${\bf{F}}_x^{(i)}$, $i \in [m_B]$. Then, we have the joint extension $\mathcal{I}_E$ of the $m_B$ component ICPs with respect to the base problem $\mathcal{I}_B$, described by the  $n_E \times m_E$ 
fitting matrix  ${\bf{F}}_x^{E}$ as given below in terms of its block matrices, where 
$n_E=\underset{j \in [m_B]}{\Sigma} n_jl_j$, and $m_E=\underset{i \in [m_B]}{\Sigma} m_i$.   
 \[ 
 \big[[{\bf{F}}^{E}_x]\big]_{i,j}=
 \begin{cases}
 {\bf{X}}          & if \ [{\bf{F}}^{B}_x]_{i,j} = x,\\
 {\bf{F}}_x^{(j)}  & if \ [{\bf{F}}^{B}_x]_{i,j} = 1,\\
 {\bf{0}}          & otherwise.
 \end{cases}
 \]
 $\forall$ $i \in [n_B]$, and $j \in [m_B]$. That is, ${\bf{F}}_x^{E}$ is obtained from  ${\bf{F}}_x^{B}$ by replacing the $1$'s in its $j$th column by ${\bf{F}}_x^{(j)}$, and replacing $x$'s and $0$'s by ${\bf{X}}$'s and ${\bf{0}}$'s of appropriate sizes respectively. The dependence of $\mathcal{I}_E$ on $(\mathcal{I}_i)_{i \in [m_B]}$ and $\mathcal{I}_B$ is denoted as $\mathcal{I}_E(\mathcal{I}_B;(\mathcal{I}_i)_{i \in [m_B]})$.  
 Throughout this paper, whenever we refer to blocks (or block matrices) of ${\bf{F}}_x^{E}$, we refer to the block matrices that are induced by ${\bf{F}}_x^{B}$ as seen in the construction of ${\bf{F}}_x^{E}$ from the fitting matrices of the component ICPs based on the fitting matrix of the base problem. The $i$th row of block matrices in ${\bf{F}}_x^{E}$ refers to the matrix $\big(\big[[{\bf{F}}^{E}_x]\big]_{i,1}| \big[[{\bf{F}}^{E}_x]\big]_{i,2}|\cdots|\cdots|\big[[{\bf{F}}^{E}_x]\big]_{i,m_B}\big)$, $i \in [n_B]$. For the sake of brevity, we refer to the $i$th row of block matrices of a matrix as its $i$th block-row.  Similarly, we refer to the $j$th column of block matrices, $j \in [m_B]$, and call it the $j$th block-column. 
  
\begin{rem}
In a recent work \cite{FYGL}, generalized lexicographic product of a finite number of side-information digraphs was introduced. The class of joint extensions introduced in this paper reduces to the generalized lexicographic product, if  the base ICP $\mathcal{I}_B$ and all the ICPs $(\mathcal{I}_i)_{i \in [m_B]}$ are SUICPs. When the base problem and all the component problems are SUICPs, the side-information digraph $G_0$ in the generalized lexicographic product in \cite{FYGL} corresponds to the base problem $\mathcal{I}_B$ stated in this paper.
\end{rem}

We illustrate the construction of the extended problem using two running examples, given the base problem and the component problems, in terms of the respective fitting matrices.  
\begin{exmp}
Consider $m_B=n_B=3$. The base problem $\mathcal{I}_{B}$ is described by the fitting matrix ${\bf{F}}_x^{B}$. Let  the component problems $(\mathcal{I}_{i})_{i \in [m_B]}$ be described by $({\bf{F}}_x^{(i)})_{i \in [m_B]}$ respectively.
\[
{\bf{F}}_x^{B}=
\left(
\begin{array}{ccc} 
1  & x & 0 \\
0  & 1 & x \\
x  & 0 & 1 \\
\end{array}
\right),
{\bf{F}}_x^{(1)}=
\left(
\begin{array}{cccc} 
1 & x & 0 & 0\\
0 & 1 & x & 0\\
0 & 0 & 1 & x\\
x & 0 & 0 & 1\\
\end{array}
\right),
\]	
\[
{\bf{F}}_x^{(2)}=
\left(
\begin{array}{cc} 
1 & x \\
x & 1 \\
\end{array}
\right),
{\bf{F}}_x^{(3)}=
\left(
\begin{array}{ccc} 
1 & 0 & x \\
x & 1 & 0 \\
x & x & 1 \\
\end{array}
\right).
\]
Observe that $n_1=m_1=4, n_2=m_2=2, n_3=m_3=3$, and $l_1=l_2=l_3=1$. All the problems involved in the construction of the extended problem are SUICPs.
The extended problem $\mathcal{I}_E(\mathcal{I}_B;(\mathcal{I}_i)_{i \in [m_B]})$ is described by ${\bf{F}}_x^{E}$ with $n_E=m_E=4+3+2=9$. The block matrices of ${\bf{F}}_x^{E}$ are indicated by the partition shown in ${\bf{F}}_x^{E}$.
\[
{\bf{F}}_x^{E}=
\left(
\begin{array}{cccc|cc|ccc} 
1 & x & 0 & 0 & x & x & 0 & 0 & 0 \\
0 & 1 & x & 0 & x & x & 0 & 0 & 0 \\
0 & 0 & 1 & x & x & x & 0 & 0 & 0 \\
x & 0 & 0 & 1 & x & x & 0 & 0 & 0 \\
\hline
0 & 0 & 0 & 0 & 1 & x & x & x & x \\
0 & 0 & 0 & 0 & x & 1 & x & x & x \\
\hline
x & x & x & x & 0 & 0 & 1 & 0 & x \\
x & x & x & x & 0 & 0 & x & 1 & 0 \\
x & x & x & x & 0 & 0 & x & x & 1 \\
\end{array}
\right).
\]  
\label{exmp1}
\end{exmp}

The following example illustrates the construction of an extended problem which is a groupcast problem, with the base problem also being a groupcast problem.

 \begin{exmp}
 	Consider $m_B=4,n_B=5$. The base problem and the component problems are described by the fitting matrices given below respectively.
 	\[
 	{\bf{F}}_x^{B}=
 	\left(
 	\begin{array}{cccc} 
 	1  & x & 0 & 0 \\
 	0  & x & 1 & 0 \\
    x  & 1 & 0 & 0 \\ 
 	0  & 0 & x & 1 \\
 	0  & 0 & 1 & x \\
 	\end{array}
 	\right),
 	{\bf{F}}_x^{(1)}=
 	\left(
 	\begin{array}{ccc} 
 	1 & x & 0\\
 	0 & x & 1\\
 	x & 1 & 0\\
 	1 & 0 & x\\
 	\end{array}
 	\right),
 	\]	
 	\[
 	{\bf{F}}_x^{(2)}=
 	\left(
 	\begin{array}{cc} 
 	1 & x \\
 	x & 1 \\
 	\end{array}
 	\right),
 	{\bf{F}}_x^{(3)}=
 	\left(
 	\begin{array}{ccc} 
 	1 & 0 & x \\
 	x & 1 & 0 \\
 	x & x & 1 \\
 	\end{array}
 	\right),
 	{\bf{F}}_x^{(4)}=
 	\left(
 	\begin{array}{c} 
 		1
 	\end{array}
 	\right)
 	.
 	\]
 	Observe that $n_1=4, m_1=3, n_2=m_2=2, n_3=m_3=3,$ and $ n_4=m_4=1$. Also, $l_1=l_2=l_4=1$ and $l_3=2$. Note that $\mathcal{I}_1$ is a groupcast problem.
 	The extended problem $\mathcal{I}_E(\mathcal{I}_B;(\mathcal{I}_i)_{i \in [m_B]})$ is described by ${\bf{F}}_x^{E}$ with $n_E=4+2+(2*3)+1=13,$ and $ m_E=3+2+3+1=9$.
 	\[
 	{\bf{F}}_x^{E}=
 	\left(
 	\begin{array}{ccc|cc|ccc|c} 
 	1 & x & 0 & x & x & 0 & 0 & 0 & 0 \\
 	0 & x & 1 & x & x & 0 & 0 & 0 & 0 \\
 	x & 1 & 0 & x & x & 0 & 0 & 0 & 0 \\
 	1 & 0 & x & x & x & 0 & 0 & 0 & 0 \\
 	\hline
 	0 & 0 & 0 & x & x & 1 & 0 & x & 0 \\
 	0 & 0 & 0 & x & x & x & 1 & 0 & 0 \\
 	0 & 0 & 0 & x & x & x & x & 1 & 0 \\
 	\hline
 	x & x & x & 1 & x & 0 & 0 & 0 & 0 \\
 	x & x & x & x & 1 & 0 & 0 & 0 & 0 \\
 	\hline
 	0 & 0 & 0 & 0 & 0 & x & x & x & 1 \\
 	\hline
 	0 & 0 & 0 & 0 & 0 & 1 & 0 & x & x \\
 	0 & 0 & 0 & 0 & 0 & x & 1 & 0 & x \\
 	0 & 0 & 0 & 0 & 0 & x & x & 1 & x \\
 	\end{array}
 	\right).
 	\]  
 	\label{exmp2}
 \end{exmp}
 
 \par The following notations are required for the construction of a larger index code from component index codes. Let $\mathcal{C}_1$ and $\mathcal{C}_2$ be  two codewords of length $l_1$ and $l_2$ respectively. $\mathcal{C}_1 + \mathcal{C}_2$ denotes the element-wise addition of $\mathcal{C}_1$ and $\mathcal{C}_2$ after zero-padding the shorter message at the least significant positions to match the length of the longer message. The resulting length of the codeword is  $max(l_1,l_2)$. For example, if $\mathcal{C}_1=1010$, and $\mathcal{C}_2=110$, then $\mathcal{C}_1 + \mathcal{C}_2 = 0110$.  $\mathcal{C}[a:b]$ denotes the vector obtained by picking the element from position $a$ to element with position $b$, starting from the most significant position of the codeword $\mathcal{C}$, with $a,b \in [l]$, $l$ being the length of $\mathcal{C}$. For example $\mathcal{C}_1[2:4]=010$.
 
The results presented in this paper hold for any finite field. But, we consider only $q=2$ (binary field) for simplicity. 
\section{Main Results}
\par In this section, we first provide a lower bound on the minrank of the jointly extended problem introduced in the previous section, in terms of those of the component problems and the upper-triangulable submatrices of the base problem. Then, we provide a code construction (not necessarily optimal) for a special class of the jointly extended problem, in terms of those of the component problems and the base problem. We then provide  necessary conditions for the optimality of the code construction.\\

The following lemma provides a lower bound on the minrank of $\mathcal{I}_E(\mathcal{I}_B;(\mathcal{I}_i)_{i \in [m_B]})$. The proof follows on similar lines as that of Lemma 4.2 in \cite{DSC1}.
We provide the proof for completeness. The set of all upper-triangulable submatrices of ${\bf{F}}^B_x$ is given by 
\[
\mathcal{U}_B=\{  {\bf{M}}_x: {\bf{M}}_x \prec {\bf{F}}^B_x, {\bf{M}}_x \in \mathcal{U}^{\mathcal{C}({\bf{M}}_x)}_q \}.
\]   

\begin{lem}[A lower bound]
	For a given jointly extended ICP $\mathcal{I}_E(\mathcal{I}_B;(\mathcal{I}_i)_{i \in [m_B]})$ we have
	\begin{gather*}
	mrk_q({\bf{F}}^{E}_x) \geq max \{\underset{s \in col({\bf{M}}_x,{\bf{F}}^{B}_x)} {\sum} mrk_q({\bf{F}}^{(s)}_x): {\bf{M}}_x \in \mathcal{U}_B\}.
	\end{gather*}
	\label{lowbnd}
\end{lem}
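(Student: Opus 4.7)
The plan is to extract from ${\bf{F}}^E_x$ a block submatrix that, after permutation, is block upper-triangular with diagonal blocks $\{{\bf{F}}^{(s)}_x : s \in col({\bf{M}}_x, {\bf{F}}^B_x)\}$, and then combine two standard facts: $mrk_q$ is monotone under taking submatrices of fitting matrices (any completion of the larger restricts to a completion of the smaller), and for a block upper-triangular matrix the rank of any completion is bounded below by the sum of the ranks of its diagonal blocks. Maximising over ${\bf{M}}_x \in \mathcal{U}_B$ then yields the claimed bound.

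Fix ${\bf{M}}_x \in \mathcal{U}_B$, set $R = row({\bf{M}}_x, {\bf{F}}^B_x)$, $S = col({\bf{M}}_x, {\bf{F}}^B_x)$, and $p := \mathcal{C}({\bf{M}}_x)$. Since every row of ${\bf{F}}^B_x$ contains exactly one $1$ (each base-problem receiver demands a unique message) and ${\bf{M}}_x$ is upper-triangulable with $1$'s on its diagonal, the $p$ ones of ${\bf{M}}_x$ must form a permutation pattern: one per row and one per column. First I would form the block submatrix ${\bf{N}}_x \prec {\bf{F}}^E_x$ consisting of all block-rows indexed by $R$ and all block-columns indexed by $S$; by construction, its $(i,j)$th block equals ${\bf{F}}^{(j)}_x$ when $[{\bf{F}}^B_x]_{i,j} = 1$, equals ${\bf{X}}$ when $[{\bf{F}}^B_x]_{i,j} = x$, and equals ${\bf{0}}$ when $[{\bf{F}}^B_x]_{i,j} = 0$. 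Next I would lift the scalar permutations ${\bf{P}}, {\bf{Q}}$ witnessing ${\bf{M}}_x \in \mathcal{U}^p_q$ to block permutations on the block-rows and block-columns of ${\bf{N}}_x$, obtaining ${\bf{N}}'_x$ whose block pattern is exactly that of the upper-triangular matrix ${\bf{P}}{\bf{M}}_x{\bf{Q}}$. Consequently, the block-diagonal of ${\bf{N}}'_x$ is precisely $\{{\bf{F}}^{(s)}_x : s \in S\}$ (in the order dictated by ${\bf{Q}}$), every block strictly below the block-diagonal is ${\bf{0}}$, and every block strictly above is either ${\bf{0}}$ or ${\bf{X}}$; in particular, ${\bf{N}}'_x$ is block upper-triangular with no known $1$'s in off-diagonal positions that could force extra rank.

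Applying the block upper-triangular rank inequality to any completion of ${\bf{N}}'_x$ and minimising over completions, together with invariance of $mrk_q$ under row and column permutations, yields $mrk_q({\bf{N}}_x) \geq \sum_{s \in S} mrk_q({\bf{F}}^{(s)}_x)$. Monotonicity of $mrk_q$ under submatrices then gives $mrk_q({\bf{F}}^E_x) \geq mrk_q({\bf{N}}_x)$, and taking the maximum over ${\bf{M}}_x \in \mathcal{U}_B$ completes the proof. The main technical obstacle I anticipate is the bookkeeping for lifting the scalar permutations ${\bf{P}}, {\bf{Q}}$ to block permutations acting on block-rows and block-columns of possibly different dimensions, together with verifying that no known $1$ from any ${\bf{F}}^{(s)}_x$ lands in a strictly off-diagonal block of ${\bf{N}}'_x$; both are structural consequences of the observation that each row and each column of ${\bf{M}}_x$ contains exactly one $1$, and neither requires genuine computation.
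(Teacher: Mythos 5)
Your proof is correct and takes essentially the same route as the paper: both extract the block submatrix of $\mathbf{F}^E_x$ indexed by $row(\mathbf{M}_x,\mathbf{F}^B_x)$ and $col(\mathbf{M}_x,\mathbf{F}^B_x)$, lift the permutations witnessing upper-triangulability of $\mathbf{M}_x$ to block permutations making it block upper-triangular with diagonal blocks $\mathbf{F}^{(s)}_x$, apply the rank inequality $rk_q \geq \sum_{s} rk_q$ of the diagonal blocks to any completion, and finish by monotonicity of minrank under submatrices. The only (harmless) difference is that the paper also establishes the matching upper bound $mrk_q(\mathbf{M}^E_x) \leq \sum_{s} mrk_q(\mathbf{F}^{(s)}_x)$ via a block-diagonal completion, which is not needed for the lemma, while you prove exactly the one-sided bound required, adding the explicit (and correct) observation that the $1$'s of $\mathbf{M}_x$ form a permutation pattern because each row of $\mathbf{F}^B_x$ contains exactly one $1$.
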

\begin{proof}
Consider the submatrix  ${\bf{M}}^E_x$ corresponding to an ${\bf{M}}_x$ constructed using the block matrices of ${\bf{F}}^E_x$ as follows. Let $(s_i,t_j)$ with $i \in [\mathcal{R}({\bf{M}}_x)]$ and $j \in [\mathcal{C}({\bf{M}}_x)]$, be an element of the cartesian product given by  $row({\bf{M}}_x,{\bf{F}}^B_x) \times col({\bf{M}}_x,{\bf{F}}^B_x)$ for any ${\bf{M}}_x \in \mathcal{U}_B$. Then, the $(i,j)$th block matrix of ${\bf{M}}^E_x$ is given by $\big[[{\bf{F}}^{E}_x]\big]_{s_i,t_j}$. From the construction of ${\bf{M}}^{E}_x$ and the fact that ${\bf{M}}_x$ is an upper-triangulable matrix, we see that   ${\bf{M}}^E_x$ can be written as a block upper-triangular matrix  ${\bf{U}}^{E}_x$, by permuting the rows and/or columns of block matrices of ${\bf{M}}^{E}_x$ using the same permutations that make ${\bf{M}}_x$ an upper-triangular matrix with all its diagonal entries being $1$. To prove the lemma, we find the minrank of ${\bf{M}}^{E}_x$ as ${\bf{M}}^{E}_x$ corresponds to a sub-problem of ${\bf{F}}^{E}_x$. Note that the minrank of any sub-problem is not greater than that of the original problem. Hence, we first provide an upper bound for  $mrk_q({\bf{M}}^{E}_x)$ and then provide a matching lower bound.
	
With all matrices ${\bf{F}}^{(t_j)}_x$, $t_j \in col({\bf{M}}_x,{\bf{F}}^B_x)$, $j \in [\mathcal{C}({\bf{M}}_x)]$, now being the diagonal block matrices of ${\bf{U}}^{E}_x$, if ${\bf{F}}^{(t_j)} \approx {\bf{F}}^{(t_j)}_x$, then the block  diagonal matrix ${\bf{D}}^{E}$ with its diagonal block matrices being  ${\bf{F}}^{(t_j)}$ in some order (due to the permutations applied on the rows and/or columns of block matrices of ${\bf{M}}^E_x$), we see that ${\bf{D}}^{E} \approx {\bf{U}}^{E}_x$. As $rk_q({\bf{D}}^{E})=\underset{t_j \in col({\bf{M}}_x,{\bf{F}}^{B}_x)} {\sum} rk_q({\bf{F}}^{(t_j)})$. Thus, we have
\begin{gather*}
mrk_q({\bf{M}}^{E}_x)= mrk_q({\bf{U}}^{E}_x)\leq rk_q({\bf{D}}^{E}) \\ = \underset{t_j \in col({\bf{M}}_x,{\bf{F}}^{B}_x)} {\sum} mrk_q({\bf{F}}^{(t_j)}_x),
\end{gather*}
where in the last equality, we take  ${\bf{F}}^{(t_j)} \approx {\bf{F}}^{(t_j)}_x$ such that $rk_q({\bf{F}}^{(t_j)})=mrk_q({\bf{F}}^{(t_j)}_x)$.	

Now, we provide a matching lower bound. If ${\bf{U}}^{E} \approx {\bf{U}}^{E}_x$, then ${\bf{U}}^{E}$ must be a block upper-triangular matrix. Note that the diagonal block entries $\big[[{\bf{U}}^{E}]\big]_{j',j'} \approx \big[[{\bf{U}}^{E}_x]\big]_{j',j'}$, and $\big[[{\bf{U}}^{E}_x]\big]_{j',j'}$  is equal to ${\bf{F}}^{(t_j)}_x$ for some $t_j \in col({\bf{M}}_x,{\bf{F}}^B_x)$ (due to the permutations applied on the rows and/or columns of block matrices of ${\bf{M}}^E_x$), $j,j' \in [\mathcal{C}({\bf{M}}_x)]$. Thus, we have 
\begin{gather*}
rk_q({\bf{U}}^{E}) \geq \underset{j' \in [\mathcal{C}({\bf{M}}_x)]} {\sum} rk_q(\big[[{\bf{U}}^{E}]\big]_{j',j'}) \\ \geq \underset{t_j \in col({\bf{M}}_x,{\bf{F}}^{B}_x)} {\sum} mrk_q({\bf{F}}^{(t_j)}_x),
\end{gather*}
which yields a matching lower bound by choosing ${\bf{U}}^{E}$ such that $rk_q({\bf{U}}^{E}) = mrk_q({\bf{U}}^{E}_x)$. This completes the proof.
\end{proof}
 
\begin{rem}
This lower bound resembles the MAIS (Maximum Acyclic Induced Subgraph) bound introduced in \cite{BY}, which is a lower bound on the minrank of the SUICP. However, the bound given in Lemma \ref{lowbnd} need not be equal to the MAIS bound for $\mathcal{I}_E(\mathcal{I}_B;(\mathcal{I}_i)_{i \in [m_B]})$. The submatrix of the fitting matrix of an SUICP corresponding to any maximum acyclic induced subgraph of the side-information digraph is upper-triangulable (as the subgraph is acyclic). Hence, we get the MAIS bound for the SUICP.  
\end{rem} 

\begin{rem}
The bound given in Lemma \ref{lowbnd} also has an operational significance in finding a lower bound on the minrank of the jointly extended problem using the minranks of some of the component sub-problems and the set $\mathcal{U}_B$ instead of directly computing lower bounds like the MAIS bound which is computation intensive.
\end{rem}
We illustrate the application of Lemma \ref{lowbnd} with two running examples. In the first example, all the problems involved are SUICPs. 
\begin{exmp}[Example \ref{exmp1} continued]
In Example \ref{exmp1}, we see that there are six upper-triangulable submatrices of ${\bf{F}}^B_x$, out of which considering all the $2 \times 2$ submatrices of ${\bf{F}}^B_x$ are sufficient to find the lower bound given in the lemma, as shown below. Note that $mrk_q({\bf{F}}^{(1)}_x)=3$, $mrk_q({\bf{F}}^{(2)}_x)=1$, and $mrk_q({\bf{F}}^{(3)}_x)=2$.
\begin{gather*}
{\bf{M}}_x^{(1)}=
\left(
\begin{array}{cc} 
1  & x \\
0  & 1 \\
\end{array}
\right), col({\bf{M}}_x^{(1)},{\bf{F}}^B_x) = \{1,2\},\\
 row({\bf{M}}_x^{(1)},{\bf{F}}^B_x) =\{1,2\},  \underset{s \in col({\bf{M}}_x^{(1)},{\bf{F}}^{B}_x)} {\sum} mrk_q({\bf{F}}^{(s)}_x)=4.
\end{gather*}
\begin{gather*}
{\bf{M}}_x^{(2)}=
\left(
\begin{array}{cc} 
1 & x \\
0 & 1 \\
\end{array}
\right),  col({\bf{M}}_x^{(2)},{\bf{F}}^B_x) = \{2,3\},\\
row({\bf{M}}_x^{(2)},{\bf{F}}^B_x) = \{2,3\}, \underset{s \in col({\bf{M}}_x^{(2)},{\bf{F}}^{B}_x)} {\sum} mrk_q({\bf{F}}^{(s)}_x)=3.
\end{gather*}
\begin{gather*}
{\bf{M}}_x^{(3)}=
\left(
\begin{array}{cc} 
1 & 0 \\
x & 1 \\
\end{array}
\right), col({\bf{M}}_x^{(3)},{\bf{F}}^B_x) = \{1,3\},\\
row({\bf{M}}_x^{(3)},{\bf{F}}^B_x) = \{1,3\}, \underset{s \in col({\bf{M}}_x^{(3)},{\bf{F}}^{B}_x)} {\sum} mrk_q({\bf{F}}^{(s)}_x)=5.
\end{gather*}
Hence, according to the lemma we have $mrk_q({\bf{F}}^{E}_x) \geq 5$. 
\label{exmp3}
\end{exmp}

In the following example, the base problem and a component problem are groupcast ICPs.
\begin{exmp}[Example \ref{exmp2} continued]
	In Example \ref{exmp2}, it can be easily seen that there are no $4 \times 4$ upper-triangulable submatrices of ${\bf{F}}^B_x$, since any combination of $4$ rows consists of either rows $1$ and $3$ or rows $4$ and $5$, which if present in a $4 \times 4$ submatrix, the submatrix is not  upper-triangulable. This is because the problem induced by rows ($1$ and $3$) and rows ($3$ and $4$) contain a cycle. Note that $mrk_q({\bf{F}}^{(1)}_x)=mrk_q({\bf{F}}^{(3)}_x)=2$ and $mrk_q({\bf{F}}^{(2)}_x)=mrk_q({\bf{F}}^{(4)}_x)=1$. Consider the submatrix given below which is upper-triangulable (There exist row and column permutations which make ${\bf{M}}_x$ upper-triangular). 
	\begin{gather*}
	{\bf{M}}_x=
	\left(
	\begin{array}{ccc} 
		1  & 0 & 0 \\
        0  & 1 & 0 \\ 
	    0  & x & 1 \\
	\end{array}
	\right), row({\bf{M}}_x,{\bf{F}}^B_x) = \{1,2,4\}, \\ col({\bf{M}}_x,{\bf{F}}^B_x) = \{1,3,4\},
	\underset{s \in col({\bf{M}}_x,{\bf{F}}^{B}_x)} {\sum} mrk_q({\bf{F}}^{(s)}_x)=5.
	\end{gather*}	
Hence, according to the lemma we have $mrk_q({\bf{F}}^{E}_x) \geq 5$. 
\label{exmp4}
\end{exmp}
 
The following lemma provides a code construction (not necessarily optimal) for a particular class of the jointly extended ICP $\mathcal{I}_E(\mathcal{I}_B;(\mathcal{I}_i)_{i \in [m_B]})$ using codes of the component problems $(\mathcal{I}_i)_{i \in [m_B]}$ and a code of the base problem $\mathcal{I}_B$.
\begin{lem}[An upper bound]
For a given jointly extended ICP $\mathcal{I}_E(\mathcal{I}_B;(\mathcal{I}_i)_{i \in [m_B]})$, let ${\bf{F}}^{(j)} \approx {\bf{F}}^{(j)}_x$, $\forall j \in [m_B]$, such that $r_j=rk_q({\bf{F}}^{(j)})$, where $r_j$ is not necessarily equal to $mrk_q({\bf{F}}^{(j)}_x)$. If there exists $(i)$ an upper-triangulable matrix ${\bf{M}}_x$ such that ${\bf{M}}_x \prec {\bf{F}}^B_x$ and $\{t_1, t_2, \cdots, t_{\mathcal{C}({\bf{M}}_x)}\}=col({\bf{M}}_x,{\bf{F}}^B_x)$, where $(t_i)_{i \in [m_B]}$ is a permutation of $[m_B]$ such that $r_{t_1} \geq r_{t_2} \geq \cdots \geq r_{t_{r_B}}$, and $r_{t_{r_B}} \geq r_{t_{i}}$ for $i \geq r_B=\mathcal{C}({\bf{M}}_x)$, and $(ii)$ there exists an ${\bf{F}}^{B} \approx {\bf{F}}^{B}_x$ with $r_B=rk_q({\bf{F}}^{B})$, where $r_B$ need not be necessarily equal to $mrk_q({\bf{F}}^{B}_x)$, such that the rows of ${\bf{F}}^{B}$ indexed by the numbers in $row({\bf{M}}_x,{\bf{F}}^B_x)$ are independent,  then there exists a scalar linear code of length  $\underset{j \in [r_B]} {\sum} r_{t_j}$.
\label{upbnd}
\end{lem}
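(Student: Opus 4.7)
I will construct an explicit scalar linear code for $\mathcal{I}_E$ of length $\sum_{k\in[r_B]}r_{t_k}$ by combining the given component codes with coefficients drawn from a carefully chosen basis of $\langle{\bf{F}}^B\rangle$. For each $j\in[m_B]$, pick a row basis ${\bf{G}}^{(j)}\in\mathbb{F}_q^{r_j\times m_j}$ of ${\bf{F}}^{(j)}$ and set $\mathbf{y}^{(j)}:={\bf{G}}^{(j)}\bx^{(j)}\in\mathbb{F}_q^{r_j}$. Let $R:=row({\bf{M}}_x,{\bf{F}}^B_x)$, $C:=col({\bf{M}}_x,{\bf{F}}^B_x)$, and let ${\bf{M}}$ be the $r_B\times r_B$ submatrix of ${\bf{F}}^B$ on rows $R$ and columns $C$. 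The row and column permutations that witness the upper-triangulability of ${\bf{M}}_x$ send ${\bf{M}}$ to an upper-triangular matrix whose diagonal is forced to consist of the $1$-entries of ${\bf{M}}_x$ and whose below-diagonal entries are the $0$-entries of ${\bf{M}}_x$; hence $\det({\bf{M}})=1$. Combined with hypothesis (ii), which says the rows of ${\bf{F}}^B$ indexed by $R$ form a basis of $\langle{\bf{F}}^B\rangle$, this shows that the coordinate projection of $\langle{\bf{F}}^B\rangle$ onto $C$ is a bijection, so there is a unique basis $\{{\bf{u}}^{(c)}:c\in C\}$ of $\langle{\bf{F}}^B\rangle$ with ${\bf{u}}^{(c)}_{c'}=\delta_{c,c'}$ for every $c'\in C$.

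For each $c\in C$, define the super-symbol
\[
{\bf{s}}^{(c)} \;=\; \sum_{j\in[m_B]}{\bf{u}}^{(c)}_j\,\mathbf{y}^{(j)}
\]
using the zero-padded $+$ from Section II. The term $\mathbf{y}^{(c)}$ enters with coefficient $1$ and length $r_c$, while for every $j\notin C$ the extremal ordering $r_{t_{r_B}}\geq r_{t_i}$ ($i\geq r_B$) gives $r_j\leq r_{t_{r_B}}\leq r_c$; every contribution to ${\bf{s}}^{(c)}$ therefore has length at most $r_c$, so $|{\bf{s}}^{(c)}|=r_c$, and concatenating the ${\bf{s}}^{(c)}$ over $c\in C$ produces an encoding of total length $\sum_{c\in C}r_c=\sum_{k\in[r_B]}r_{t_k}$.

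To verify that this encoding is a valid index code, fix any receiver $(i,\ell)$ of $\mathcal{I}_E$ with $i\in[n_B]$ a block-row index and $\ell\in[n_{j_i}]$ selecting a row inside that block, where $j_i$ is the unique column in which row $i$ of ${\bf{F}}^B_x$ carries a $1$ and the demanded message is $\bx^{(j_i)}_{f_{j_i}(\ell)}$. Expanding $[{\bf{F}}^B]_{i,:}=\sum_{c\in C}\alpha^{(i)}_c\,{\bf{u}}^{(c)}$ in the basis and applying the same coefficients to the super-symbols yields
\[
\sum_{c\in C}\alpha^{(i)}_c\,{\bf{s}}^{(c)} \;=\; \mathbf{y}^{(j_i)}+\sum_{\substack{j\neq j_i\\ [{\bf{F}}^B]_{i,j}\neq 0}}[{\bf{F}}^B]_{i,j}\,\mathbf{y}^{(j)}.
\]
Every $j\neq j_i$ in the right-hand sum satisfies $[{\bf{F}}^B_x]_{i,j}=x$ (row $i$ of ${\bf{F}}^B_x$ carries only one $1$, and $0$-entries are preserved under any completion), so by the construction of ${\bf{F}}^E_x$ the entire block $\big[[{\bf{F}}^E_x]\big]_{i,j}$ is $x$-filled and the receiver holds $\bx^{(j)}$ as side information; it can therefore compute and subtract $\mathbf{y}^{(j)}$, leaving $\mathbf{y}^{(j_i)}$. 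Feeding $\mathbf{y}^{(j_i)}$ to the $j_i$-th component decoder, together with the within-block side information encoded in the $\ell$-th row of ${\bf{F}}^{(j_i)}_x$, recovers $\bx^{(j_i)}_{f_{j_i}(\ell)}$.

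The main obstacle I foresee is the length bookkeeping for the zero-padded sums: the collapse $|{\bf{s}}^{(c)}|=r_c$ requires both hypotheses in concert---the upper-triangulability of ${\bf{M}}_x$ to force invertibility of ${\bf{M}}$ and produce the basis $\{{\bf{u}}^{(c)}\}$, together with the extremal ordering $r_{t_{r_B}}\geq r_{t_i}$ to dominate every $r_j$ with $j\notin C$. Once the basis is in hand the decoding step is essentially forced, but care is still required to trace the side-information structure of the $\mathcal{I}_E(\mathcal{I}_B;(\mathcal{I}_i)_{i\in[m_B]})$ construction and confirm that every unwanted $\mathbf{y}^{(j)}$ can indeed be computed by the receiver that must cancel it.
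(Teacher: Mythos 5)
Your proof is correct, and it reaches the bound by a genuinely different route from the paper. The paper works entirely in the matrix-completion picture: after sorting permutations, it completes ${\bf{F}}^{E}_x$ explicitly, filling the first $r_B$ block-rows with $({\bf{F}}^{B}_{i,1}{\bf{\hat{F}}}^{(i,1)}|\cdots|{\bf{F}}^{B}_{i,m_B}{\bf{\hat{F}}}^{(i,m_B)})$ --- that is, it uses the rows of ${\bf{F}}^{B}$ indexed by $row({\bf{M}}_x,{\bf{F}}^B_x)$ directly as combining coefficients, \emph{truncating} longer component encoders to $r_i$ rows (${\bf{F}}^{(j)}_{[[r_i]]}$ when $r_j \geq r_i$); it then shows the stacked matrix ${\bf{\hat{F}}}^{E}$ has rank $\underset{j \in [r_B]}{\sum} r_{t_j}$ via block upper-triangularity, and separately verifies that every remaining block-row ($i \in [n_B]\setminus[r_B]$) admits a valid completion lying in $\langle {\bf{\hat{F}}}^{E} \rangle$ using the dependence matrices ${\bf{P}}^{B}$ and ${\bf{P}}^{(j_i)}$; the codelength claim then follows from the minrank characterization. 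You instead pre-multiply by ${\bf{M}}^{-1}$: hypotheses $(i)$ and $(ii)$ give invertibility of ${\bf{M}}$ (upper-triangulability pins the $1$'s on and the $0$'s below the diagonal in every completion) and hence the dual basis $\{{\bf{u}}^{(c)}\}$, which makes each super-symbol dominated by exactly one component code, with all other contributions coming from columns outside $C$, whose lengths are bounded by $r_{t_{r_B}}$ --- so no truncation of component codes is ever needed and the length bookkeeping is immediate. Decodability is then uniform over all $i \in [n_B]$: expanding row $i$ of ${\bf{F}}^{B}$ in the ${\bf{u}}^{(c)}$ basis and cancelling the codewords of the $x$-blocks from side information replaces the paper's separate treatment of the dependent block-rows. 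What the paper's route buys is the completion ${\bf{F}}^{E}$ itself, the object on which its Lemma \ref{lowbnd}/Theorem \ref{thm1} matching-bound machinery operates; what yours buys is an encoder--decoder argument in the spirit of the paper's own Theorem \ref{lemexplbase} (implicitly, ${\bf{G}}^{E}$ has block-rows given by ${\bf{u}}^{(c)}$-weighted zero-padded ${\bf{G}}^{(j)}$'s and ${\bf{D}}^{E}$ is built from the $\alpha^{(i)}_c$ and component decoders, so Lemma \ref{dgmatrix} applies), together with a transparent account of why exactly conditions $(i)$ and $(ii)$ are needed. Two cosmetic repairs: over a general field $\det({\bf{M}})=\pm 1$ rather than $1$ (invertibility is all you use; over $\mathbb{F}_2$ your statement is exact), and you should state explicitly the standard fact your last step relies on, namely that a row basis ${\bf{G}}^{(j_i)}$ of a completion ${\bf{F}}^{(j_i)} \approx {\bf{F}}^{(j_i)}_x$ is a valid encoding matrix for $\mathcal{I}_{j_i}$, since every row of ${\bf{F}}^{(j_i)}$ lies in $\langle {\bf{G}}^{(j_i)} \rangle$.
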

\begin{proof}
We provide a construction of a scalar linear code with the stated codelength. For an easier visualization of the code construction and to alleviate the need of more notations, we permute the rows of the fitting matrices and the completions (given in the statement of the lemma) of all the component problems and the base problem as stated in the following. We also permute the columns of the fitting matrix and the completion of the base problem. Note that the permutation applied on the columns and/or rows of any given fitting matrix (mentioned above) is same as that applied on the columns and/or rows of the respective completion. Then, we provide a code construction for the jointly extended ICP $\mathcal{I'}_E(\mathcal{I'}_B;(\mathcal{I'}_i)_{i \in [m_B]})$, where the base problem ($\mathcal{I'}_B$) and all the component problems ($(\mathcal{I'}_i)_{i \in [m_B]}$) have fitting matrices obtained by the above mentioned permutations of rows and/or columns of the original fitting matrices. Note that there is no loss of generality in proving the lemma for the extended problem  $\mathcal{I'}_E$ as the permutations mentioned above rename the messages (in the case of column permutations) and receivers (in the case of row permutations), which do not change the extended problem, the base problem and the component problems.

The rows of ${\bf{F}}^{(j)}$ and ${\bf{F}}^{(j)}_x$ are permuted with the same permutation such that the first $r_j$ rows of ${\bf{F}}^{(j)}$ are independent and span $\langle {\bf{F}}^{(j)} \rangle$ (the row space of ${\bf{F}}^{(j)}$), $\forall j \in [m_B]$. Note that such a permutation exists as $r_j=rk_q({\bf{F}}^{(j)})$, $\forall j \in [m_B]$. Let $\{s_1,s_2,\cdots,s_{r_B}\}=row({\bf{M}}_x,{\bf{F}}^{B}_x)$, where $(s_i)_{i \in [n_B]}$ is a permutation of $[n_B]$ such that $r_{s_1} \geq r_{s_2} \geq \cdots \geq r_{s_{r_B}}$, and $r_{s_{r_B}} \geq r_{s_i}$ for $i \geq r_B$,   where $r_{s_i}=rk_q({\bf{F}}^{(j)})$ such that $({\bf{F}}^{B}_x)_{s_i,j}=1$, $j \in [m_B]$.
The rows of ${\bf{F}}^{B}$ and ${\bf{F}}^{B}_x$ are permuted with the same permutation such that the rows of ${\bf{F}}^{B}$ indexed by the elements in $row({\bf{M}}_x,{\bf{F}}^B_x)$ are mapped to the first $r_B$ rows of ${\bf{F}}^{B}$ such that the row indexed by ${s_i}$ is mapped to the row indexed by $i$, $i \in [r_B]$. The columns of ${\bf{F}}^{B}$ and ${\bf{F}}^{B}_x$ are also permuted with the same permutation such that the columns of ${\bf{F}}^{B}$ indexed by the elements in $col({\bf{M}}_x,{\bf{F}}^B_x)$ are mapped to the first $r_B$ columns of ${\bf{F}}^{B}$. 
Now, consider the  fitting matrix ${\bf{F}}^{E}_x$ of the jointly extended ICP $\mathcal{I}_E(\mathcal{I}_B;(\mathcal{I}_i)_{i \in [m_B]})$, with the fitting  matrices of the base problem (${\bf{F}}^{B}_x$) and the component problems ($({\bf{F}}^{(j)}_x)_{j \in [m_B]}$) obtained after the above mentioned permutations. (Observe that we have not renamed the problems, fitting matrices, and their completions obtained after the permutation to have brevity in the notation. Due to the permutation, we now have $(t_i)_{i \in r_B}$ (defined in the statement of the theorem) mapped to $[r_B]$ in some order, which form the new $(t_i)_{i \in r_B}$. Hence, we now have $r_{r_B} \geq r_i, \forall i \geq r_B, i \in [m_B]$.) 
We now provide a completion of ${\bf{F}}^{E}_x$ and show that it is a valid completion. Then, we prove that the codelength obtained by such a completion is  $\underset{j \in [r_B]} {\sum} r_{j}$. 

Let ${\bf{P}}^{(j)}$ be an $(n_j-r_j) \times r_j$ matrix such that the last $(n_j-r_j)$ rows of  ${\bf{F}}^{(j)}$ are given by  ${\bf{P}}^{(j)}{\bf{F}}^{(j)}_{[[r_j]]}$, $\forall j \in [m_B]$. Let ${\bf{P}}^{B}$ be an $(n_B-r_B) \times r_B$ matrix such that the last $(n_B-r_B)$ rows of  ${\bf{F}}^{B}$ are given by  ${\bf{P}}^{B}{\bf{F}}^{B}_{[[r_B]]}$. Complete the first $r_i$ consecutive rows of the $i$th row of block matrices of ${\bf{F}}^{E}_x$, $i \in [r_B]$, with $({\bf{{F}}}^{B}_{i,1}{\bf{\hat{F}}}^{(i,1)}|{\bf{{F}}}^{B}_{i,2}{\bf{\hat{F}}}^{(i,2)}|\cdots|\cdots|{\bf{{F}}}^{B}_{i,{m_B}}{\bf{\hat{F}}}^{(i,{m_B})})$, where ${\bf{\hat{F}}}^{(i,j)}$ is given as 
    \begin{equation}
    {\bf{\hat{F}}}^{(i,j)} =
     \begin{cases}
     {\bf{F}}^{(j)}_{[[r_i]]} & if \ r_j \geq r_i, \\
     \left( \begin{array}{c} {\bf{F}}^{(j)}_{[[r_j]]}\\
     \hline
     {\bf{0}}_{(r_i-r_j) \times m_j}\\
     \end{array}
     \right) & if \ r_j < r_i.
     \end{cases}
     \label{Fcap}
     \end{equation}
     $j \in [m_B]$. Note that ${\bf{{F}}}^{B}_{i,j}$ is a scalar for $i \in [n_B]$, and $j \in [m_B]$.
     Complete the next $n_i-r_i$ consecutive rows of the $i$th row of block matrices of ${\bf{F}}^{E}_x$ with ${\bf{P}}^{(i)}({\bf{{F}}}^{B}_{i,1}{\bf{\hat{F}}}^{(i,1)}|{\bf{{F}}}^{B}_{i,2}{\bf{\hat{F}}}^{(i,2)}|\cdots|\cdots|{\bf{{F}}}^{B}_{i,{m_B}}{\bf{\hat{F}}}^{(i,{m_B})})$. Note that these $n_i-r_i$ consecutive rows are in the row space of the first $r_i$ consecutive rows of the $i$th row of block matrices. Note also that this is a valid completion of the first $r_B$ rows of block matrices of ${\bf{{F}}}^{E}_{x}$. Consider the matrix ${\bf{\hat{F}}}^{E}$ obtained by stacking the first $r_i$ rows of the $i$th row of block matrices of the completion one above the other starting from $i=1$, for $i \in [r_B]$. 
     
     From the fact that ${\bf{M}}_x$ is upper triangulable and hence by some permutations of the rows and columns of block matrices in ${\bf{\hat{F}}}^{E}$ it can be made upper-triangular. The resulting matrix has ${\bf{\hat{F}}}^{(i,i)}$, $i \in [r_B]$, as its block diagonal matrices  which are full rank matrices. By appropriate row reductions of the rows of block matrices, it can be easily seen that the rank of this matrix is $\underset{j \in [r_B]} {\sum} r_{j}$.  Now, we complete the remaining rows of block matrices of ${\bf{F}}^{E}_x$ and provide a completion which is in the row space of the matrix ${\bf{\hat{F}}}^{E}$ described above.
     
     Consider any $i$th row of block matrices for $i \in [n_B] \setminus [r_B]$. Let $j_i \in [m_B]$ be such that  $[{\bf{F}}^{B}_x]_{i,j_i}=1$. Complete the  first $r_{j_i}$ rows of any $(i,j)$th block matrix $\big[[{\bf{F}}^{E}_x]\big]_{i,j}$, $j \in [m_B]$, using  $\big[[{\bf{F}}^{E}]\big]_{i,j}=\underset{k \in [r_B]} {\sum} {\bf{P}}^{B}_{i,k}{\bf{F}}^{B}_{k,j}{\bf{\hat{F}}}^{(k,j)}_{[[r_{j_i}]]} = (\underset{k \in [r_B]} {\sum} {\bf{P}}^{B}_{i,k}{\bf{F}}^{B}_{k,j}){\bf{\hat{F}}}^{(1,j)}_{[[r_{j_i}]]}$, as ${\bf{\hat{F}}}^{(k,j)}_{[[r_{j_i}]]}={\bf{\hat{F}}}^{(1,j)}_{[[r_{j_i}]]}$ for any $k \in [r_B]$ from the definition of ${\bf{\hat{F}}}^{(i,j)}$. It can be easily verified that this is a valid completion of any $(i,j)$th block matrix $\big[[{\bf{F}}^{E}_x]\big]_{i,j}$.  If $j=j_i$, then $\underset{k \in [r_B]} {\sum} {\bf{P}}^{B}_{i,k}{\bf{F}}^{B}_{k,j}=1$, as ${\bf{F}}^{B} \approx {\bf{F}}^{B}_x$, and hence $\big[[{\bf{F}}^{E}]\big]_{i,j} \approx \big[[{\bf{F}}^{E}_x]\big]_{i,j}$. If $j \neq j_i$ such that $[{\bf{F}}^{B}_x]_{i,j}=0$, we know that $\underset{k \in [r_B]} {\sum} {\bf{P}}^{B}_{i,k}{\bf{F}}^{B}_{k,j}=0$ (as ${\bf{F}}^{B} \approx {\bf{F}}^{B}_x$ as before), and hence again $\big[[{\bf{F}}^{E}]\big]_{i,j} \approx \big[[{\bf{F}}^{E}_x]\big]_{i,j}$. Similarly it can be verified that the first $r_{j_i}$ rows of the $i$th row of block matrices is in the row space of ${\bf{\hat{F}}}^{E}$. The remaining $n_i-r_{j_i}$ rows of any $i$th row of block matrices for $i \in [n_B] \setminus [r_B]$ are filled by pre-multiplying the first $r_{j_i}$ rows of the $i$th row of block matrices by ${\bf{P}}^{(j_i)}$. It can be easily verified that this is a valid completion and the completion is in the row space of the first $r_{j_i}$ rows, which is in turn in the row space of ${\bf{\hat{F}}}^{E}$. This completes the proof.    
\end{proof}

We illustrate the use of Lemma \ref{upbnd} using a running example. 

\begin{exmp}(Example \ref{exmp1} continued)
	Consider the completions ${\bf{F}}^{(i)}$ of ${\bf{F}}^{(i)}_x$, $i \in [3]$ as given below with $r_1=rk_q({\bf{F}}^{(1)})=3$, $r_2=rk_q({\bf{F}}^{(2)})=1$, and $r_3=rk_q({\bf{F}}^{(3)})=3$. 
		\[
		{\bf{F}}^{(1)}=
		\left(
		\begin{array}{cccc} 
		1 & 1 & 0 & 0\\
		0 & 1 & 1 & 0\\
		0 & 0 & 1 & 1\\
		1 & 0 & 0 & 1\\
		\end{array}
		\right),~
		{\bf{F}}^{(2)}=
		\left(
		\begin{array}{cc} 
		1 & 1 \\
		1 & 1 \\
		\end{array}
		\right),
		\]
		\[
		{\bf{F}}^{(3)}=
		\left(
		\begin{array}{ccc} 
		1 & 0 & 0 \\
		0 & 1 & 0 \\
		0 & 0 & 1 \\
		\end{array}
		\right),~
		{\bf{P}}^{(1)}=
		\left(
		\begin{array}{ccc} 
		1 & 1 & 1 \\
		\end{array}
		\right),~
		{\bf{P}}^{(2)}=
		\left(
		\begin{array}{c} 
		1 \\
		\end{array}
		\right).
		\]
	Note that only the completions ${\bf{F}}^{(1)}$ and ${\bf{F}}^{(2)}$ correspond to optimal codes as $r_1$ and $r_2$ are equal to $mrk_q({\bf{F}}^{(1)}_x)$ and $mrk_q({\bf{F}}^{(2)}_x)$ respectively. Also, $r_1 \geq r_3 \geq r_2$. Hence, letting $t_1=1$, $t_2=3$, and $t_3=2$, and taking the third submatrix ${\bf{M}}_x^{(3)}$ of ${\bf{F}}^B_x$ given in Example \ref{exmp3} (given below for easy reference), we see that the condition $(i)$ given in Lemma \ref{upbnd} is satisfied.
	\begin{gather*}
	{\bf{M}}_x^{(3)}=
	\left(
	\begin{array}{cc} 
	1 & 0 \\
	x & 1 \\
	\end{array}
	\right), row({\bf{M}}_x^{(3)},{\bf{F}}^B_x) = \{1,3\},\\ col({\bf{M}}_x^{(3)},{\bf{F}}^B_x) = \{1,3\}.
	\end{gather*}
Note that by taking ${\bf{F}}^{B} \approx {\bf{F}}^{B}_x$	as given below, condition $(ii)$ given in Lemma \ref{upbnd} is also satisfied.
\[
{\bf{F}}^{B}=
\left(
\begin{array}{ccc} 
1  & 1 & 0 \\
0  & 1 & 1 \\
1  & 0 & 1 \\
\end{array}
\right), r_B=rk_q({\bf{F}}^{B})=mrk_q({\bf{F}}^{B}_x)=2.
\]
Now we complete the fitting matrix ${\bf{F}}^E_x$ as given in Lemma \ref{upbnd} as shown below. Note that double lines (in ${\bf{F}}^E$) used for partitioning correspond to the block matrices of ${\bf{F}}^E_x$. The single lines correspond to the construction given in Lemma \ref{upbnd}. The matrix ${\bf{\hat{F}}}^{E}$ is also shown below.
\[
{\bf{F}}^{E}=
\left(
\begin{array}{cccc||cc||ccc} 
1 & 1 & 0 & 0 & 1 & 1 & 0 & 0 & 0 \\
0 & 1 & 1 & 0 & 0 & 0 & 0 & 0 & 0 \\
0 & 0 & 1 & 1 & 0 & 0 & 0 & 0 & 0 \\
\hline
1 & 0 & 0 & 1 & 1 & 1 & 0 & 0 & 0 \\
\hline
\hline
0 & 0 & 0 & 0 & 1 & 1 & 1 & 0 & 0 \\
\hline
0 & 0 & 0 & 0 & 1 & 1 & 1 & 0 & 0 \\
\hline
\hline
1 & 1 & 0 & 0 & 0 & 0 & 1 & 0 & 0 \\
0 & 1 & 1 & 0 & 0 & 0 & 0 & 1 & 0 \\
0 & 0 & 1 & 1 & 0 & 0 & 0 & 0 & 1 \\
\end{array}
\right).
\]  
\[
{\bf{\hat{F}}}^{E}=
\left(
\begin{array}{cccc||cc||ccc} 
1 & 1 & 0 & 0 & 1 & 1 & 0 & 0 & 0 \\
0 & 1 & 1 & 0 & 0 & 0 & 0 & 0 & 0 \\
0 & 0 & 1 & 1 & 0 & 0 & 0 & 0 & 0 \\
\hline
\hline
1 & 1 & 0 & 0 & 0 & 0 & 1 & 0 & 0 \\
0 & 1 & 1 & 0 & 0 & 0 & 0 & 1 & 0 \\
0 & 0 & 1 & 1 & 0 & 0 & 0 & 0 & 1 \\
\end{array}
\right).
\]  
Consider the message set given by  $\mathcal{M}=\{{\bf{x}}_1={\bf{x}}_1^{(1)},{\bf{x}}_2={\bf{x}}_1^{(2)},{\bf{x}}_3={\bf{x}}_1^{(3)},{\bf{x}}_4={\bf{x}}_1^{(4)},{\bf{x}}_5={\bf{x}}_2^{(1)},{\bf{x}}_6={\bf{x}}_2^{(2)},{\bf{x}}_7={\bf{x}}_3^{(1)},{\bf{x}}_8={\bf{x}}_3^{(2)},{\bf{x}}_9={\bf{x}}_3^{(3)}\}$. The index code is given by ${\bf{\hat{F}}}^{E}{\bf{x}}$, where $x=({\bf{x}}_1~{\bf{x}}_2~{\bf{x}}_3~{\bf{x}}_4~{\bf{x}}_5~{\bf{x}}_6~{\bf{x}}_7~{\bf{x}}_8~{\bf{x}}_9)^T$. Hence, the code is given by  $\mathcal{C}_E=({\bf{x}}_1+{\bf{x}}_2+{\bf{x}}_5+{\bf{x}}_6$, $~{\bf{x}}_2+{\bf{x}}_3$, $~{\bf{x}}_3+{\bf{x}}_4$, $~{\bf{x}}_1+{\bf{x}}_2+{\bf{x}}_7$, $~{\bf{x}}_2+{\bf{x}}_3+{\bf{x}}_8$, $~{\bf{x}}_3+{\bf{x}}_4+{\bf{x}}_9)$. It can be easily verified that all receivers can decode their wanted messages using their side-information and $\mathcal{C}_E$.

Note that the code given by ${\bf{F}}^{B}$ for the base problem is $\mathcal{C}_B=({\bf{x}}^{(B)}_1+{\bf{x}}^{(B)}_2,~ {\bf{x}}^{(B)}_1+{\bf{x}}^{(B)}_3)$, where the message set for the base problem is given by $\mathcal{M}_B=\{{\bf{x}}^{(B)}_1,{\bf{x}}^{(B)}_2,{\bf{x}}^{(B)}_3\}$. Considering the codes of the component problems given by  $\mathcal{C}_1=({\bf{x}}_1^{(1)}+{\bf{x}}_2^{(1)},~ {\bf{x}}_2^{(1)}+{\bf{x}}_3^{(1)},~{\bf{x}}_3^{(1)}+{\bf{x}}_4^{(1)})$,   $\mathcal{C}_2=({\bf{x}}_1^{(2)}+{\bf{x}}_2^{(2)})$, and $\mathcal{C}_3=({\bf{x}}_1^{(3)},~ {\bf{x}}_2^{(3)},~{\bf{x}}_3^{(3)})$, we see that the code $\mathcal{C}_E$ can also be written as $(\mathcal{C}_1 + \mathcal{C}_2,\mathcal{C}_1 + \mathcal{C}_3 )$. This shows the dependence of the code $\mathcal{C}_E$ on those of the base problem and the component problems. In the code $\mathcal{C}_B$, ${\bf{x}}^{(B)}_i$ is replaced by  $\mathcal{C}_i$, for $i \in [m_B]$, to obtain $\mathcal{C}_E$.
\end{exmp}

We now provide an example with a given groupcast ICP.
\begin{exmp}
Consider the groupcast ICP given by the fitting matrix shown below with $n_E=14$ and $m_E=11$. 
\[{\bf{F}}^{E}_x=
\left(
\begin{array}{ccccccccccc} 
1 & 0 & 0 & 0 & x & x & x & 0 & 0 & 0 & 0\\
0 & 1 & x & 0 & x & x & x & 0 & 0 & 0 & 0\\
0 & 0 & 1 & 0 & x & x & x & 0 & 0 & 0 & 0\\
x & 0 & 0 & 1 & x & x & x & 0 & 0 & 0 & 0\\
0 & 0 & 0 & 0 & 1 & 0 & 0 & 0 & 0 & x & x\\
0 & 0 & 0 & 0 & 0 & 1 & 0 & 0 & 0 & x & x\\
0 & 0 & 0 & 0 & x & 0 & 1 & 0 & 0 & x & x\\
0 & 0 & 0 & 0 & 0 & 0 & 0 & 1 & 0 & x & x\\	
0 & 0 & 0 & 0 & 0 & 0 & 0 & x & 1 & x & x\\	
x & x & x & x & 1 & 0 & 0 & 0 & 0 & 0 & 0\\
x & x & x & x & 0 & 1 & 0 & 0 & 0 & 0 & 0\\
x & x & x & x & x & 0 & 1 & 0 & 0 & 0 & 0\\
0 & 0 & 0 & 0 & x & x & x & 0 & 0 & 1 & x\\
0 & 0 & 0 & 0 & x & x & x & 0 & 0 & 0 & 1\\
\end{array}
\right).
\]
It can be easily identified that this problem is a jointly extended ICP introduced in this paper. The fitting matrices of the component problems and the base problem are given below.
\[{\bf{F}}^{B}_x=
\left(
\begin{array}{cccc} 
1 & x & 0 & 0\\
0 & 1 & 0 & x\\
0 & 0 & 1 & x \\
x & 1 & 0 & 0 \\
0 & x & 0 & 1\\
\end{array}
\right), ~{\bf{F}}^{(1)}_x=
\left(
\begin{array}{cccc} 
1 & 0 & 0 & 0\\
0 & 1 & x & 0\\
0 & 0 & 1 & 0\\
x & 0 & 0 & 1\\
\end{array}
\right),
\]
 \[{\bf{F}}^{(2)}_x=
 \left(
 \begin{array}{ccc} 
 1 & 0 & 0 \\
 0 & 1 & 0 \\
 x & 0 & 1 \\
 \end{array}
 \right), ~{\bf{F}}^{(3)}_x=
 \left(
 \begin{array}{cc} 
 1 & 0\\
 x & 1\\
 \end{array}
 \right),
 \]
 \[ {\bf{F}}^{(4)}_x=
 \left(
 \begin{array}{cc} 
 1 & x\\
 0 & 1\\
 \end{array}
 \right).
 \]
Note that $mrk_q({\bf{F}}^{(1)}_x)=4$, $mrk_q({\bf{F}}^{(2)}_x)=3$,
$mrk_q({\bf{F}}^{(3)}_x)=mrk_q({\bf{F}}^{(4)}_x)=2$, and $mrk_q({\bf{F}}^{B}_x)=3$. As in Example \ref{exmp4}, we find that there are no $4 \times 4$ upper-triangulable submatrices of ${\bf{F}}^{E}_x$. By enumerating all possible $3 \times 3$ submatrices, we see that there are three $3 \times 3$ upper-triangulable submatrices as given below.
\begin{gather*}
{\bf{M}}_x^{(1)}=
\left(
\begin{array}{ccc} 
1 & x & 0 \\
0 & 1 & 0 \\ 
0 & 0 & 1 \\
\end{array}
\right),~ row({\bf{M}}_x^{(1)},{\bf{F}}^B_x) = \{1,2,3\}, \\ col({\bf{M}}_x^{(1)},{\bf{F}}^B_x) = \{1,2,3\}, \underset{s \in col({\bf{M}}_x^{(1)},{\bf{F}}^{B}_x)} {\sum} mrk_q({\bf{F}}^{(s)}_x)=9.
\end{gather*}
\begin{gather*}
{\bf{M}}_x^{(2)}=
\left(
\begin{array}{ccc} 
1 & 0 & 0 \\
0 & 1 & x \\ 
0 & 0 & 1 \\
\end{array}
\right),~ row({\bf{M}}_x^{(2)},{\bf{F}}^B_x) = \{1,3,5\}, \\ col({\bf{M}}_x^{(2)},{\bf{F}}^B_x) = \{1,3,4\}, \underset{s \in col({\bf{M}}_x^{(2)},{\bf{F}}^{B}_x)} {\sum} mrk_q({\bf{F}}^{(s)}_x)=8.
\end{gather*}
\begin{gather*}
{\bf{M}}_x^{(3)}=
\left(
\begin{array}{ccc} 
0 & 1 & x \\
1 & 0 & 0 \\ 
x & 0 & 1 \\
\end{array}
\right),~ row({\bf{M}}_x^{(3)},{\bf{F}}^B_x) = \{3,4,5\}, \\ col({\bf{M}}_x^{(3)},{\bf{F}}^B_x) = \{2,3,4\}, \underset{s \in col({\bf{M}}_x^{(3)},{\bf{F}}^{B}_x)} {\sum} mrk_q({\bf{F}}^{(s)}_x)=7.
\end{gather*}
Consider the completion of ${\bf{F}}^B_x$ given below. Observe that the first three rows are independent and span $\langle {\bf{F}}^B \rangle$. Note that this choice of ${\bf{F}}^{B}$ and ${\bf{M}}_x^{(1)}$ satisfy conditions $(i)$ and $(ii)$ given in the lemma.
\[{\bf{F}}^{B}=
\left(
\begin{array}{cccc} 
1 & 1 & 0 & 0\\
0 & 1 & 0 & 1\\
0 & 0 & 1 & 1 \\
1 & 1 & 0 & 0 \\
0 & 1 & 0 & 1\\
\end{array}
\right).
\]
We complete ${\bf{F}}^{E}_x$ as given in the lemma. The encoding matrix ${\bf{\hat{F}}}^{E}$ obtained by this completion  is also given below. Observe that the codelength is $4+3+2=9$ as stated by the lemma.
\[{\bf{F}}^{E}=
\left(
\begin{array}{cccc||ccc||cc||cc} 
1 & 0 & 0 & 0 & 1 & 0 & 0 & 0 & 0 & 0 & 0\\
0 & 1 & 0 & 0 & 0 & 1 & 0 & 0 & 0 & 0 & 0\\
0 & 0 & 1 & 0 & 0 & 0 & 1 & 0 & 0 & 0 & 0\\
0 & 0 & 0 & 1 & 0 & 0 & 0 & 0 & 0 & 0 & 0\\
\hline
\hline
0 & 0 & 0 & 0 & 1 & 0 & 0 & 0 & 0 & 1 & 0\\
0 & 0 & 0 & 0 & 0 & 1 & 0 & 0 & 0 & 0 & 1\\
0 & 0 & 0 & 0 & 0 & 0 & 1 & 0 & 0 & 0 & 0\\
\hline
\hline
0 & 0 & 0 & 0 & 0 & 0 & 0 & 1 & 0 & 0 & 0\\	
0 & 0 & 0 & 0 & 0 & 0 & 0 & 0 & 1 & 0 & 0\\	
\hline
\hline
1 & 0 & 0 & 0 & 1 & 0 & 0 & 0 & 0 & 0 & 0\\
0 & 1 & 0 & 0 & 0 & 1 & 0 & 0 & 0 & 0 & 0\\
0 & 0 & 1 & 0 & 0 & 0 & 1 & 0 & 0 & 0 & 0\\
\hline
\hline
0 & 0 & 0 & 0 & 1 & 0 & 0 & 0 & 0 & 1 & 0\\
0 & 0 & 0 & 0 & 0 & 1 & 0 & 0 & 0 & 0 & 1\\
\end{array}
\right).
\]
\[
{\bf{\hat{F}}}^{E}=
\left(
\begin{array}{cccc||ccc||cc||cc} 
1 & 0 & 0 & 0 & 1 & 0 & 0 & 0 & 0 & 0 & 0\\
0 & 1 & 0 & 0 & 0 & 1 & 0 & 0 & 0 & 0 & 0\\
0 & 0 & 1 & 0 & 0 & 0 & 1 & 0 & 0 & 0 & 0\\
0 & 0 & 0 & 1 & 0 & 0 & 0 & 0 & 0 & 0 & 0\\
\hline
\hline
0 & 0 & 0 & 0 & 1 & 0 & 0 & 0 & 0 & 1 & 0\\
0 & 0 & 0 & 0 & 0 & 1 & 0 & 0 & 0 & 0 & 1\\
0 & 0 & 0 & 0 & 0 & 0 & 1 & 0 & 0 & 0 & 0\\
\hline
\hline
0 & 0 & 0 & 0 & 0 & 0 & 0 & 1 & 0 & 0 & 0\\	
0 & 0 & 0 & 0 & 0 & 0 & 0 & 0 & 1 & 0 & 0\\
\end{array}
\right).
\] 
\label{exmp6}
\end{exmp}
Now, we state and prove the main result of this section, which establishes the minrank of a special class of jointly extended problems identified in this paper.
 
\begin{thm}
		For a given jointly extended ICP $\mathcal{I}_E(\mathcal{I}_B;(\mathcal{I}_i)_{i \in [m_B]})$, with $r_j=mrk_q({\bf{F}}^{(j)}_x)$, $\forall j \in [m_B]$. Let $r_{t_1} \geq r_{t_2} \geq \cdots \geq r_{t_{m_B}}$, where $t_j,j \in [m_B]$. If there exists $(i)$ an upper-triangulable matrix ${\bf{\hat{M}}}_x \prec {\bf{F}}^B_x$ such that 		
		\begin{gather*}
		\underset{s \in col({\bf{\hat{M}}}_x,{\bf{F}}^{B}_x)} {\Sigma} mrk_q({\bf{F}}^{(s)}_x) = \\ max \{\underset{s \in col({\bf{M}}_x,{\bf{F}}^{B}_x)} {\Sigma} mrk_q({\bf{F}}^{(s)}_x): {\bf{M}}_x \in \mathcal{U}_B\},
		\end{gather*}
		where $col({\bf{\hat{M}}}_x,{\bf{F}}^B_x) = \{t_1, t_2, \cdots, t_{\mathcal{C}({\bf{\hat{M}}}_x)}\}$, and there exists $(ii)$ an ${\bf{F}}^{B} \approx {\bf{F}}^{B}_x$ with   $r_B=rk_q({\bf{F}}^{B})=\mathcal{C}({\bf{\hat{M}}}_x)$, such that the rows of ${\bf{F}}^{B}$ indexed by the numbers in $row({\bf{\hat{M}}}_x,{\bf{F}}^B_x)$ are independent, then we have  $mrk_q({\bf{F}}^{E}_x)=\underset{j \in [r_B]} {\sum} r_{t_j}$.
		\label{thm1}
\end{thm}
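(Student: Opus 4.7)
The plan is to establish equality by sandwiching $mrk_q(\mathbf{F}^E_x)$ between matching lower and upper bounds coming from Lemma \ref{lowbnd} and Lemma \ref{upbnd}, each instantiated with the upper-triangulable submatrix $\hat{\mathbf{M}}_x$ supplied by hypothesis (i) of the theorem. In other words, the theorem is essentially a ``lemma-matching'' statement whose hypotheses have been rigged so that the lower bound from Lemma \ref{lowbnd} and the upper bound from Lemma \ref{upbnd} collapse to the same number, $\sum_{j \in [r_B]} r_{t_j}$.

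For the lower bound, I would invoke Lemma \ref{lowbnd} directly. Since by hypothesis $\hat{\mathbf{M}}_x \in \mathcal{U}_B$ realizes the maximum $\max\{\sum_{s \in col(\mathbf{M}_x,\mathbf{F}^B_x)} mrk_q(\mathbf{F}^{(s)}_x) : \mathbf{M}_x \in \mathcal{U}_B\}$, and $col(\hat{\mathbf{M}}_x,\mathbf{F}^B_x)=\{t_1,\dots,t_{r_B}\}$, Lemma \ref{lowbnd} gives
\[
mrk_q(\mathbf{F}^E_x) \;\geq\; \sum_{s \in col(\hat{\mathbf{M}}_x,\mathbf{F}^B_x)} mrk_q(\mathbf{F}^{(s)}_x) \;=\; \sum_{j \in [r_B]} r_{t_j}.
\]

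For the matching upper bound, I would apply Lemma \ref{upbnd} with the specific parameter choice $r_j = mrk_q(\mathbf{F}^{(j)}_x)$ for each $j \in [m_B]$ (the lemma explicitly allows $r_j$ to exceed the minrank, so optimal completions are admissible), and with $\mathbf{M}_x=\hat{\mathbf{M}}_x$. The step to check carefully is the verification of Lemma \ref{upbnd}'s condition (i): Lemma \ref{upbnd} demands not only that the columns of $\mathbf{M}_x$ correspond to indices $t_1,\dots,t_{r_B}$ satisfying $r_{t_1}\geq\cdots\geq r_{t_{r_B}}$, but also that $r_{t_{r_B}} \geq r_{t_i}$ for every $i\geq r_B$. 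This is supplied by the global ordering $r_{t_1} \geq r_{t_2} \geq \cdots \geq r_{t_{m_B}}$ stated in the theorem together with $col(\hat{\mathbf{M}}_x,\mathbf{F}^B_x)=\{t_1,\dots,t_{r_B}\}$, so the indices in $col(\hat{\mathbf{M}}_x,\mathbf{F}^B_x)$ are exactly the positions of the $r_B$ largest component minranks. Condition (ii) of Lemma \ref{upbnd} is identical to hypothesis (ii) of the theorem. The lemma then produces a scalar linear code of length $\sum_{j\in[r_B]} r_{t_j}$, so $mrk_q(\mathbf{F}^E_x) \leq \sum_{j\in[r_B]} r_{t_j}$, and combining with the lower bound yields equality.

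The main obstacle is not so much a technical one as a book-keeping one: making sure the permutation indexing $(t_j)$ used in the theorem is consistent with the one demanded by Lemma \ref{upbnd}, and that the ``greedy'' choice of the top-$r_B$ component minranks is the one that simultaneously (a) achieves the $\mathcal{U}_B$-maximum in Lemma \ref{lowbnd} and (b) can be packed along the block-diagonal of the completion described in the proof of Lemma \ref{upbnd}. Hypotheses (i) and (ii) of the theorem are exactly what force both to happen with the same $\hat{\mathbf{M}}_x$, so no new construction is required beyond the two lemmas already proved.
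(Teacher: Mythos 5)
Your proposal is correct and follows essentially the same route as the paper, whose proof of this theorem is precisely the observation that Lemmas \ref{lowbnd} and \ref{upbnd}, instantiated under hypotheses (i) and (ii), yield matching lower and upper bounds equal to $\underset{j \in [r_B]}{\sum} r_{t_j}$. Your explicit verification that the theorem's global ordering $r_{t_1} \geq \cdots \geq r_{t_{m_B}}$ together with $col({\bf{\hat{M}}}_x,{\bf{F}}^B_x)=\{t_1,\dots,t_{r_B}\}$ supplies condition (i) of Lemma \ref{upbnd} (including $r_{t_{r_B}} \geq r_{t_i}$ for $i \geq r_B$) is exactly the book-keeping the paper leaves implicit.
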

\begin{proof}
The proof follows directly from Lemmas \ref{lowbnd} and \ref{upbnd}, which provide a lower bound and the matching upper bound respectively, with the conditions stated in the theorem.
\end{proof}

The optimality of the scalar linear code given in Example \ref{exmp6} follows from this theorem. We provide another example to illustrate the use of the theorem.
\begin{exmp}
Consider the groupcast ICP given by the fitting matrix shown below with $n_E=15$ and $m_E=14$. The fitting matrices of the component problems and the base problem are also identified given below.
\[{\bf{F}}^{E}_x=
\left(
\begin{array}{ccccccccccccc} 
1 & x & x & x & x & x & x & x & x & 0 & 0 & 0 & 0\\
0 & 1 & x & x & x & x & x & x & x & 0 & 0 & 0 & 0\\
0 & 0 & 1 & x & 0 & x & x & x & x & x & x & x & x\\
0 & 0 & 0 & 1 & x & x & x & x & x & x & x & x & x\\
0 & 0 & x & 0 & 1 & x & x & x & x & x & x & x & x\\
x & x & 0 & 0 & 0 & 1 & x & x & 0 & x & x & x & x\\
x & x & 0 & 0 & 0 & 0 & 1 & x & x & x & x & x & x\\
x & x & 0 & 0 & 0 & x & 0 & 1 & x & x & x & x & x\\	
x & x & 0 & 0 & 0 & x & x & 0 & 1 & x & x & x & x\\	
x & x & x & x & x & 0 & 0 & 0 & 0 & 1 & x & 0 & 0\\
x & x & x & x & x & 0 & 0 & 0 & 0 & 0 & 1 & 0 & x\\
x & x & x & x & x & 0 & 0 & 0 & 0 & 0 & 0 & 1 & x\\
x & x & x & x & x & 0 & 0 & 0 & 0 & x & 1 & 0 & 0\\
x & x & x & x & x & 0 & 0 & 0 & 0 & 0 & x & 0 & 1\\
\end{array}
\right).
\]
\[{\bf{F}}^{B}_x={\bf{F}}^{(3)}_x=
\left(
\begin{array}{cccc}
1 & x & x & 0\\
0 & 1 & x & x\\
x & 0 & 1 & x\\
x & x & 0 & 1\\
\end{array}
\right),
{\bf{F}}^{(1)}_x=
\left(
\begin{array}{cc}
1 & x\\ 
0 & 1\\ 
\end{array}
\right),
\]	
\[{\bf{F}}^{(2)}_x=
\left(
\begin{array}{ccc}
1 & x & 0\\
0 & 1 & x\\
x & 0 & 1\\
\end{array}
\right),
{\bf{F}}^{(4)}_x=
\left(
\begin{array}{cccc}
1 & x & 0 & 0\\ 
0 & 1 & 0 & x\\
0 & 0 & 1 & x\\
x & 1 & 0 & 0\\
0 & x & 0 & 1\\ 
\end{array}
\right).
\]	
Note that $mrk_q({\bf{F}}^{(1)}_x)=mrk_q({\bf{F}}^{(2)}_x)=mrk_q({\bf{F}}^{(3)}_x)=2$, and
$mrk_q({\bf{F}}^{(4)}_x)=3$. Observe that there are no upper-triangulable matrices of size $3 \times 3$ in ${\bf{F}}^{B}_x$. Consider the following upper-triangulable submatrix. 
\begin{gather*}
{\bf{M}}_x=
\left(
\begin{array}{cc} 
1 & x\\
0 & 1\\ 
\end{array}
\right),~ row({\bf{M}}_x,{\bf{F}}^B_x) = \{3,4\}, \\ col({\bf{M}}_x,{\bf{F}}^B_x) = \{3,4\}, \underset{s \in col({\bf{M}}_x,{\bf{F}}^{B}_x)} {\sum} mrk_q({\bf{F}}^{(s)}_x)=5.
\end{gather*}
Consider the completion of ${\bf{F}}^B_x$ given below. Observe that the last two rows are independent and span $\langle {\bf{F}}^B \rangle$. Note that this choice of ${\bf{F}}^{B}$ and ${\bf{M}}_x$ satisfy conditions $(i)$ and $(ii)$ given in the theorem.
\[{\bf{F}}^{B}=
\left(
\begin{array}{cccc} 
1 & 0 & 1 & 0\\
0 & 1 & 0 & 1\\
1 & 0 & 1 & 0\\
0 & 1 & 0 & 1\\
\end{array}
\right).
\]
We complete ${\bf{F}}^{E}_x$ as given in Lemma \ref{upbnd}. The encoding matrix ${\bf{\hat{F}}}^{E}$ obtained by this completion  is also given below. Observe that the codelength is $3+2=5$ as stated by the theorem.
\[{\bf{F}}^{E}=
\left(
\begin{array}{cc|ccc|cccc|cccc} 
1 & 0 & 0 & 0 & 0 & 1 & 0 & 1 & 0 & 0 & 0 & 0 & 0\\
0 & 1 & 0 & 0 & 0 & 0 & 1 & 0 & 1 & 0 & 0 & 0 & 0\\
\hline
0 & 0 & 1 & 1 & 0 & 0 & 0 & 0 & 0 & 1 & 0 & 1 & 0\\
0 & 0 & 0 & 1 & 1 & 0 & 0 & 0 & 0 & 0 & 1 & 0 & 1\\
0 & 0 & 1 & 0 & 1 & 0 & 0 & 0 & 0 & 1 & 1 & 1 & 1\\
\hline
1 & 0 & 0 & 0 & 0 & 1 & 0 & 1 & 0 & 0 & 0 & 0 & 0\\
0 & 1 & 0 & 0 & 0 & 0 & 1 & 0 & 1 & 0 & 0 & 0 & 0\\
1 & 0 & 0 & 0 & 0 & 1 & 0 & 1 & 0 & 0 & 0 & 0 & 0\\	
0 & 1 & 0 & 0 & 0 & 0 & 1 & 0 & 1 & 0 & 0 & 0 & 0\\	
\hline
0 & 0 & 1 & 1 & 0 & 0 & 0 & 0 & 0 & 1 & 1 & 0 & 0\\
0 & 0 & 0 & 1 & 1 & 0 & 0 & 0 & 0 & 0 & 1 & 0 & 1\\
0 & 0 & 0 & 0 & 0 & 0 & 0 & 0 & 0 & 0 & 0 & 1 & 1\\
0 & 0 & 1 & 1 & 0 & 0 & 0 & 0 & 0 & 1 & 1 & 0 & 0\\
0 & 0 & 0 & 1 & 1 & 0 & 0 & 0 & 0 & 0 & 1 & 0 & 1\\
\end{array}
\right).
\]
\[{\bf{\hat{F}}}^{E}=
\left(
\begin{array}{cc|ccc|cccc|cccc} 
1 & 0 & 0 & 0 & 0 & 1 & 0 & 1 & 0 & 0 & 0 & 0 & 0\\
0 & 1 & 0 & 0 & 0 & 0 & 1 & 0 & 1 & 0 & 0 & 0 & 0\\
\hline
0 & 0 & 1 & 1 & 0 & 0 & 0 & 0 & 0 & 1 & 1 & 0 & 0\\
0 & 0 & 0 & 1 & 1 & 0 & 0 & 0 & 0 & 0 & 1 & 0 & 1\\
0 & 0 & 0 & 0 & 0 & 0 & 0 & 0 & 0 & 0 & 0 & 1 & 1\\
\end{array}
\right).
\]
\end{exmp}



\section{An algorithm to obtain scalar linear codes for a special class of jointly extended problems}

In this section, we provide an algorithm (Algorithm $1$) to construct scalar linear codes for the special class of jointly extended problems identified in this paper (Section II). Any scalar linear code of every sub-problem  and that of the base problem are given in terms of their encoding matrices as inputs to the algorithm. An encoding matrix of the given code for the $i$th sub-problem with codelength $r_i$ is denoted as ${\bf{G}}^{(i)}$, $i \in [m_B]$ (of size $r_i \times m_i$). An encoding matrix of the  given code for the base problem with codelength $r_B$ is denoted as ${\bf{G}}^{B}$ (of size $r_B \times m_B$).
An associated decoding matrix of the given code of the base problem (denoted as ${\bf{D}}^B$) and its fitting matrix are also inputs to the algorithm. The algorithm provides a scalar linear code for the jointly extended problem in terms of an encoding matrix denoted as ${\bf{G}}^E$. The given codes of the sub-problems and the base problem need not be optimal. The resulting scalar linear code of the jointly extended problem need not be optimal, even when all the related codes (inputs to the algorithm) are optimal. We can obtain different scalar linear codes for the jointly extended problem by providing different sets of codes for the sub-problems and the base problem as inputs to the algorithm. 

\begin{algorithm}
	\caption{An algorithm to construct a scalar linear code matrix for any jointly extended problem $\mathcal{I}_E(\mathcal{I}_{B};(\mathcal{I}_{i})_{i \in [m_B]})$.}
	\label{algo1}
	\begin{algorithmic}[1]
		\item [\bf{Inputs}:] Full-rank encoding matrices $\{{\bf{G}}^{(i)}\}_{i \in [m_B]}$ of problems $(\mathcal{I}_i)_{i \in [m_B]}$ with respective sizes given by $\{r_i \times m_i\}_{i \in [m_B]}$, a full-rank $r_B \times m_B$ encoding matrix ${\bf{G}}^{B}$ of $\mathcal{I}_B$, an associated  $n_B \times r_B$ decoding matrix ${\bf{D}}^{B}$, and the fitting matrix ${\bf{F}}_x^{B}$ of the base problem.
		\item [\bf{Output}:] An encoding matrix ${\bf{G}}^{E}$ giving a code for $\mathcal{I}_E$.
		\item [\bf{Procedure}:]	
		\State Let $\sigma : [m_B] \rightarrow [m_B]$ be any permutation such that $r_{\sigma(1)} \geq r_{\sigma(2)} \geq \cdots \geq  r_{\sigma(m_B)}$, $\Psi = [r_B]$, and $t=1$. 
		\While{$\Psi \neq \phi$} 
		\State $\mathcal{U}^{(t)} \triangleq \{ u : ({\bf{F}}_{x}^{B})_{u,\sigma(t)} = 1$, $u \in [n_B] \}.$ 
		\State Let $\mathcal{U}^{(t)} =   \{u^{(t,1)}, u^{(t,2)}, \cdots ,u^{(t,|\mathcal{U}^{(t)}|)} \}$.
		\State Initialize $\mathcal{A}^{(t)} = \phi$, $\mathcal{B}^{(0,j)}=\phi$, for all $j \in [m_B]$.
		\For {$i = 1$ to $|\mathcal{U}^{(t)}|$}
		\State $\mathcal{A}^{(t)} \leftarrow \{k : {\bf{D}}^{B}_{u^{(t,i)},k}{\bf{G}}^{B}_{k,\sigma(t)} \neq 0, k \in \Psi\}  \cup  \mathcal{A}^{(t)}$.
		\EndFor   
		\State
		Let $\mathcal{A}^{(t)} = \{a^{(t,1)},\cdots,a^{(t,|\mathcal{A}^{(t)}|)}\}$.
		\State Initialize $\mathcal{B}^{(t,j)}=\phi$, for all $j \in [m_B]$.
		\State Let $\mathcal{B}^{(t-1)} = \{\mathcal{B}^{(t-1,1)},\cdots,\mathcal{B}^{(t-1,m_B)}\}$.
		\State {\bf{FILL}}$(\mathcal{A}^{(t)},t,\{\mathcal{B}^{(t')}\}_{t' \in [t-1]}, \{(r_{i},{\bf{G}}^{(i)})\}_{i \in [m_B]})$.
		\State Set $\Psi \leftarrow \Psi \setminus \mathcal{A}^{(t)}$.
		\For {$i = 1$ to $|\mathcal{U}^{(t)}|$}
		\State $\mathcal{V}^{(t,i)} \triangleq \{v: ({\bf{F}}^{B}_x)_{u^{(t,i)},v}=0\}$.
		\State Let $\mathcal{V}^{(t,i)} = \{v^{(t,i)}_1,\cdots,v^{(t,i)}_{|\mathcal{V}^{(t,i)}|}\}$.
		\For {$j = 1$ to $|\mathcal{V}^{(t,i)}|$}
		\State $\mathcal{Y}=\{k : {\bf{D}}^{B}_{u^{(t,i)},k}{\bf{G}}^{B}_{k,v^{(t,i)}_j} \neq 0, ~k \in \Psi\}$.
		\If {$r_{v^{(t,i)}_j} < r_{\sigma(t)}$}
		\State $\mathcal{B}^{(t,v^{(t,i)}_j)} \leftarrow  \mathcal{Y} \cup  \mathcal{B}^{(t,v^{(t,i)}_j)}$.
		\Else 
		\State $\mathcal{B}^{(t,\sigma(t))} \leftarrow  \mathcal{Y} \cup  \mathcal{B}^{(t,\sigma(t))}$.
		\EndIf
		\EndFor
		\EndFor
		\If {$t=m_B$ and $\Psi \neq \phi$}
		\State Let $\Psi = \{a_1, \cdots, a_{|\Psi|}\}$.
		\For {$i = 1$ to $|\Psi|$}
		\State $\hat{r}_{a_i}= max \{r_k : a_i \in \mathcal{B}^{(t',k)}, (t',k) \in [m_B]$ $~~~~~~~~~~~~~~\times [m_B]\}$. Fill $(a_i,j)$th  block matrix of ${\bf{G}}^{E}$, $~~~~~~~~~~~~~~$ with ${\bf{G}}^{B}_{a_i,j}{\bf{\hat{G}}}^{(a_i,j)}$, $\forall j \in [m_B]$, where  ${\bf{\hat{G}}}^{(a_i,j)}$ $~~~~~~~~~~~~~~$ is given by  \[ 
		{\bf{\hat{G}}}^{(a_i,j)}=
		\begin{cases}
		{\bf{G}}^{(j)}_{[[\hat{r}_{a_i}]]} & if \ \hat{r}_{a_i} < r_j,\\
		\left( \begin{array}{c} {\bf{G}}^{(j)}\\
		\hline
		{\bf{0}}_{(\hat{r}_{a_i}-r_{j}) \times m_j}\\
		\end{array}
		\right)        & \ otherwise.
		\end{cases}
		\]
		\EndFor
		\State $\Psi \leftarrow \phi$.
		\EndIf
		\State  $t \leftarrow t+1$.
		\EndWhile
		\item [\bf{Return}:] Encoding matrix  ${\bf{G}}^E$ of size $ (\underset{i \in [r_B]}{\Sigma} \hat{r}_i) \times (\underset{j \in [m_B]}{\Sigma} m_j)$.
	\end{algorithmic}
\end{algorithm}

The algorithm uses the given codes of the sub-problems according to that of the base problem to complete the fitting matrix of the jointly extended problem. We assume the encoding matrix ${\bf{G}}^E$ provided by the algorithm to be of the form as given in (\ref{extGmat}) (having $r_B$ block-rows and $m_B$ block-columns). Similarly, an associated decoding matrix ${\bf{D}}^E$ is assumed to be of the form as given in (\ref{extDmat}). The validity of these encoding and decoding matrices is proved in Theorem \ref{lemexplbase}. For any given $j \in [m_B]$ and  all $i \in [r_B]$, the matrices $\hat{{\bf{G}}}^{(i,j)}$ are obtained from ${\bf{G}}^{(j)}$ by either appending appropriate number of all-zero rows or deleting appropriate number of rows as given in the accompanying algorithm ${\bf{FILL}}$ (Algorithm $2$).
Similarly, for any given $i \in [n_B]$ and all $j \in [r_B]$, the matrices $\hat{{\bf{D}}}^{(i,j)}$ are obtained from ${\bf{D}}^{(i)}$, by either appending all-zero columns or deleting appropriate number of columns as given in the proof of Theorem \ref{lemexplbase}. The accompanying algorithm named as {\bf{FILL}} progressively fills the $r_B$ block-rows of ${\bf{G}}^E$. The number of rows in each block-row is also decided and given as $\hat{r}_i$, $i \in [r_B]$, in {\bf{FILL}}.

\begin{algorithm}
	\caption{The algorithm to fill some block-rows of ${\bf{G}}^E$.}
	\begin{algorithmic}[1]
		\item [\bf{Inputs}:] $\mathcal{A}^{(t)}$, $t$,  $\{\mathcal{B}^{(t')}\}_{t' \in [t-1] }$, $\{(r_{i},{\bf{G}}^{(i)})\}_{i \in [m_B]}$.
		\item [\bf{Outputs}:] Filled block-rows of ${\bf{G}}^{E}$ and the number or rows  $~~~~~~~~$ in each block-row with row indices in $\mathcal{A}^{(t)}$.  
		\item [\bf{Procedure}:]  {\bf{FILL}}$(\mathcal{A}^{(t)},t,\{\mathcal{B}^{(t')}\}_{t' \in [t-1]}, \{(r_{i},{\bf{G}}^{(i)})\}_{i \in [m_B]})$.	
		\For {$i = 1$ to $|\mathcal{A}^{(t)}|$}
		\State $\hat{r}_{a^{(t,i)}}= max(r_{\sigma(t)}, max \{r_k : a^{(t,i)} \in \mathcal{B}^{(t',k)}, (t',k)$ $~~~~~~~~~~~~~~~~\in [t-1] \times [m_B]\})$.
		\State Fill the $(a^{(t,i)},j)$th block matrix of ${\bf{G}}^{E}$, $\forall j \in [m_B]$, $~~~~$ with ${\bf{G}}^{B}_{a^{(t,i)},j}{\bf{\hat{G}}}^{(a^{(t,i)},j)}$, where ${\bf{\hat{G}}}^{(a^{(t,i)},j)}$ is given by $~~~~~ 
		{\bf{\hat{G}}}^{(a^{(t,i)},j)}=
		\begin{cases}
		{\bf{G}}^{(j)}_{[[\hat{r}_{a^{(t,i)}}]]} & if \ \hat{r}_{a^{(t,i)}} < r_j,\\
		\left( \begin{array}{c} {\bf{G}}^{(j)}\\
		\hline
		{\bf{0}}_{(\hat{r}_{a^{(t,i)}}-r_{j}) \times m_j}\\
		\end{array}
		\right)        & \ otherwise.
		\end{cases}
		$
		\EndFor
		\item [\bf{Return}:] Block-rows of  ${\bf{G}}^E$ and the number of rows in each block-row with row indices in $\mathcal{A}^{(t)}$, given by $\{\hat{r}_{l}\}_{l \in \mathcal{A}^{(t)}}$.
	\end{algorithmic}
	\label{subr1}
\end{algorithm}

We first choose a permutation $\sigma$ of the set $[m_B]$. The permutation $\sigma$ orders the codelengths of the given codes of all the sub-problems in any non-increasing order as given in Line $1$. The set $\Psi$ (initialized in Line $1$) consists of indices of the block-rows of ${\bf{G}}^E$ that are not filled until the current iteration of the while loop. The while loop iterates until all the block-rows of ${\bf{G}}^E$ are filled. Variable `$t$' tracks the iteration number.  Any $k$th row of ${\bf{G}}^B$ (equivalently $k$th block-row of ${\bf{G}}^E$), where $k \in [r_B]$, is said to contribute to the completion of the $(i,j)$th entry of ${\bf{F}}_x^B$ (equivalently $(i,j)$th   block matrix of ${\bf{F}}_x^E$) for $i \in [n_B]$, $j \in [m_B]$, if ${\bf{D}}^B_{i,k}{\bf{G}}^B_{k,j}\neq 0$. In $t$th iteration of the while loop, we fill the block-rows of ${\bf{G}}^E$ that contribute to the completion of all the occurrences of ${\bf{F}}_x^{(\sigma(t))}$ (in ${\bf{F}}_x^E$), that were not completed in previous iterations. This is explained further in the following.   

In $t$th iteration, the set $\mathcal{U}^{(t)}$ consists of the indices of all rows of ${\bf{F}}_x^B$, which have a $1$ in $\sigma(t)$th column (Line $4$). Note that this is same as the set of indices of all the block-rows of ${\bf{F}}_x^E$, which have ${\bf{F}}_x^{(\sigma(t))}$ in $\sigma(t)$th block-column. The set $\mathcal{A}^{(t)}$ consists of indices of all the rows of  ${\bf{G}}^B$ that are present in $\Psi$, and contribute to the completion of some or all entries in $\sigma(t)$th column of ${\bf{F}}_x^B$ consisting of $1$'s (Line $7$). Hence, $\mathcal{A}^{(t)}$ consists of indices of all the block-rows of  ${\bf{G}}^E$ that are present in $\Psi$ (that is, they are not filled in any previous iterations), and contribute to completing some or all occurrences of ${\bf{F}}_x^{(\sigma(t))}$ in $\sigma(t)$th block-column of ${\bf{F}}_x^E$. This is made clear in the proof of Theorem \ref{lemexplbase} and follows from taking the product   ${\bf{D}}^E$${\bf{G}}^E$. Note that some occurrences of  ${\bf{F}}_x^{(\sigma(t))}$ in $\sigma(t)$th block-column of ${\bf{F}}_x^E$ might be completed in previous iterations, as a result of completing the fitting matrices of sub-problems in ${\bf{F}}_x^E$ with larger given codelengths. The remaining occurrences of ${\bf{F}}_x^{(\sigma(t))}$ (if any are present) are completed in $t$th iteration. Thus, by the end of $t$th iteration, we fill all the the block-rows of ${\bf{G}}^E$ that contribute to the completion of all the occurrences of ${\bf{F}}_x^{(\sigma(t))}$ in ${\bf{F}}_x^{E}$. 

Note that in $t$th iteration, the accompanying algorithm ${\bf{FILL}}$ fills the block-rows of ${\bf{G}}^E$, which contributes to completing those occurrences of ${\bf{F}}_x^{(\sigma(t))}$ in ${\bf{F}}_x^{E}$, which were not completed by the block-rows filled in the previous iterations. 
Observe that the index of any row of ${\bf{G}}^B$ present in $[r_B] \setminus \Psi$ and contributing to completing any entry of ${\bf{F}}_x^B$ is not included in $\mathcal{A}^{(t)}$. These indices need not be taken into consideration, as the corresponding block-rows have already  contributed in completing the fitting matrices of sub-problems in ${\bf{F}}_x^E$, with larger given codelengths than that of the sub-problem addressed in the current iteration.  

The algorithm ${\bf{FILL}}$ fills the block-rows of ${\bf{G}}^E$ with indices given by $\mathcal{A}^{(t)}$, after deciding the number of rows required for each of these block-rows. In the $t$th iteration, the number of rows required for each block-row is decided by the sets $\{\mathcal{B}^{(t')}\}_{t' \in [t-1]}$. These sets are filled in previous iterations as given from Line $14$ to Line $25$ of Algorithm $1$.  Note that they are populated after updating the set $\Psi$.  For every row index $u^{(t,i)}$ present in $\mathcal{U}^{(t)}$, $i \in [|\mathcal{U}^{(t)}|]$, the set  $\mathcal{V}^{(t,i)}$ consists of the indices of block-columns containing ${\bf{0}}$ matrices present in $u^{(t,i)}$th block-row of ${\bf{F}}_x^E$. Note that the filling of block-rows of ${\bf{G}}^E$ must also satisfy the completion of these block matrices by  ${\bf{0}}$ matrices. For every index of the block-column present in $\mathcal{V}^{(t,i)}$, $i \in [|\mathcal{U}^{(t)}|]$, the set $\mathcal{Y}$ consists of the indices of block-rows of ${\bf{G}}^E$, that contribute to completing the ${\bf{0}}$ matrix in the given block-column and $u^{(t,i)}$th block-row. The set $\mathcal{B}^{(t,j)}$ consists of indices of block-rows of ${\bf{G}}^E$ that contribute to completing the   ${\bf{0}}$ matrices present in $j$th block-column of ${\bf{F}}_x^E$, for $j \in [m_B]$. This set of indices is related to completing ${\bf{0}}$ matrices in the block-rows with indices in $\mathcal{A}^{(t)}$. Note that if $r_j \geq r_{\sigma(t)}$, then $\mathcal{B}^{(t,\sigma(t))}$ is updated instead of $\mathcal{B}^{(t,j)}$. This point will be made clear in the proof of Theorem \ref{lemexplbase}. The Lines $26$ to $30$ describe the filling of any remaining block-rows of ${\bf{G}}^E$, after $m_B$ iterations of the while loop. 

If the index of any block-row of ${\bf{G}}^E$ contributing to complete any occurrence of  ${\bf{F}}_x^{(\sigma(t))}$ is not present in any set $\mathcal{B}^{(t',k)}$, for  $t' \in [t-1]$, and $k \in [m_B]$, then the block-row is assigned $r_{\sigma(t)}$ rows. This implies that the desired block-row did not contribute to complete any block matrix of ${\bf{F}}_x^E$ in the previous iterations. However, if the index of any block-row of ${\bf{G}}^E$ contributing to complete any occurrence of  ${\bf{F}}_x^{(\sigma(t))}$ is present in some set $\mathcal{B}^{(t',k)}$, for  $t' \in [t-1]$, and $k \in [m_B]$, then the block-row is assigned the number of rows as given in Line $2$ of ${\bf{FILL}}$. 

We illustrate Algorithm $1$ with an example and then provide a proof to show its correctness.

\begin{exmp}
	Consider $m_B=n_B=5$. The fitting matrices ${\bf{F}}_x^{B}$ and $({\bf{F}}_x^{(i)})_{i \in [m_B]}$ are given below.
	\[
	{\bf{F}}_x^{B}=
	\left(
	\begin{array}{ccccc} 
	1  & x & x & 0 & 0 \\
	0  & 1 & x & x & 0 \\
	0  & 0 & 1 & x & x \\
	x  & 0 & 0 & 1 & x \\
	x  & x & 0 & 0 & 1 \\
	\end{array}
	\right),
	{\bf{F}}_x^{(1)}=
	\left(
	\begin{array}{cccc} 
	1 & x & 0 & 0\\
	0 & 1 & x & 0\\
	0 & 0 & 1 & x\\
	x & 0 & 0 & 1\\
	\end{array}
	\right),
	\]	
	\[
	{\bf{F}}_x^{(2)}=
	\left(
	\begin{array}{ccc} 
	1 & x & 0\\
	0 & 1 & x\\
	x & 0 & 1\\
	\end{array}
	\right),
	{\bf{F}}_x^{(3)}=
	\left(
	\begin{array}{ccc} 
	1 & 0 & x \\
	x & 1 & 0 \\
	x & x & 1 \\
	\end{array}
	\right),
	\]
	\[
	{\bf{F}}_x^{(4)}=	
	\left(
	\begin{array}{cc} 
	1 & x \\
	x & 1 \\
	\end{array}
	\right),
	{\bf{F}}_x^{(5)}=	
	\left(
	\begin{array}{c} 
	1  \\
	\end{array}
	\right).
	\]
${\bf{F}}_x^{E}$ is given below without vertical partitions for reference.
	\[
	\left(
	\begin{array}{ccccccccccccc} 
	1 & x & 0 & 0 & x & x & x & x & x & x & 0 & 0 & 0 \\
	0 & 1 & x & 0 & x & x & x & x & x & x & 0 & 0 & 0 \\
	0 & 0 & 1 & x & x & x & x & x & x & x & 0 & 0 & 0 \\
	x & 0 & 0 & 1 & x & x & x & x & x & x & 0 & 0 & 0 \\
	\hline
	0 & 0 & 0 & 0 & 1 & x & 0 & x & x & x & x & x & 0 \\
	0 & 0 & 0 & 0 & 0 & 1 & x & x & x & x & x & x & 0\\
	0 & 0 & 0 & 0 & x & 0 & 1 & x & x & x & x & x & 0\\
	\hline
	0 & 0 & 0 & 0 & 0 & 0 & 0 & 1 & 0 & x & x & x & x\\
	0 & 0 & 0 & 0 & 0 & 0 & 0 & x & 1 & 0 & x & x & x\\
	0 & 0 & 0 & 0 & 0 & 0 & 0 & 0 & x & 1 & x & x & x\\
	\hline
	x & x & x & x & 0 & 0 & 0 & 0 & 0 & 0 & 1 & x & x\\
	x & x & x & x & 0 & 0 & 0 & 0 & 0 & 0 & x & 1 & x\\
	\hline 
	x & x & x & x & x & x & x & 0 & 0 & 0 & 0 & 0 & 1\\
	\end{array}
	\right).
	\]  
	Let us consider some optimal encoding matrices of the related problems and an associated decoding matrix of the base problem as given below. These matrices are taken as inputs to Algorithm $1$. Note that $r_1=3,r_2=r_3=2,r_4=r_5=1$, and $r_B=3$. Let $\sigma$ be the identity permutation. That is, $\sigma(i)=i$, $\forall i \in \{1,2,3,4,5\}$. Initially, $\Psi = \{1,2,3\}$.
		\[
		{\bf{G}}^{B}=
		\left(
		\begin{array}{ccccc} 
		0  & 1 & 1 & 1 & 0 \\
		0  & 0 & 1 & 1 & 1 \\
		1  & 0 & 0 & 1 & 0 \\
		\end{array}
		\right),
		{\bf{D}}^{B}=
		\left(
		\begin{array}{ccc} 
		1 & 0 & 1\\
	    1 & 0 & 0\\
		0 & 1 & 0\\
		0 & 0 & 1\\
		1 & 1 & 0\\	
		\end{array}
		\right),
		\]	
		\[
		{\bf{G}}^{(1)}=
		\left(
		\begin{array}{cccc} 
		1 & 1 & 0 & 0\\
		0 & 1 & 1 & 0\\
		0 & 0 & 1 & 1\\
		\end{array}
		\right),
		{\bf{G}}^{(2)}={\bf{G}}^{(3)}=
		\left(
		\begin{array}{ccc} 
		1 & 1 & 0\\
		0 & 1 & 1\\
		\end{array}
	   \right),
		\]
				\[
				{\bf{G}}^{(4)}=
				\left(
				\begin{array}{cc} 
				1 & 1 \\
				\end{array}
				\right),
				{\bf{G}}^{(5)}=
					\left(
					\begin{array}{c} 
					1 \\
					\end{array}
					\right).
					\]
	For the first iteration, $\mathcal{U}^{(1)}=1$, $\mathcal{A}^{(1)}=3$, and hence $\hat{r}_{3}=r_1$ according to ${\bf{FILL}}$. Hence, the third block-row of ${\bf{G}}^{E}$ is given as below. Now $\Psi=\{1,2\}$, and $\mathcal{V}^{(1,1)}=\{4,5\}$. Hence, we get $\mathcal{B}^{(1,4)}=\{1,3\}$, and $\mathcal{B}^{(1,1)}=\mathcal{B}^{(1,2)}=\mathcal{B}^{(1,3)}=\mathcal{B}^{(1,5)}=\Phi$.
				\[
				\left(
				\begin{array}{ccccccccccccc} 
				1 & 1 & 0 & 0 & 0 & 0 & 0 & 0 & 0 & 0 & 1 & 1 & 0 \\
				0 & 1 & 1 & 0 & 0 & 0 & 0 & 0 & 0 & 0 & 0 & 0 & 0 \\
				0 & 0 & 1 & 1 & 0 & 0 & 0 & 0 & 0 & 0 & 0 & 0 & 0 \\
				\end{array}
				\right).
				\]		
	For the second iteration, $\mathcal{U}^{(2)}=2$, $\mathcal{A}^{(2)}=1$, and hence $\hat{r}_{1}=r_2$ according to ${\bf{FILL}}$. Hence, the first block-row of ${\bf{G}}^{E}$ is given as below. Now $\Psi=\{2\}$, and $\mathcal{V}^{(2,1)}=\{1,5\}$. Hence,  $\mathcal{B}^{(2,1)}=\mathcal{B}^{(2,2)}=\mathcal{B}^{(2,3)}=\mathcal{B}^{(2,4)}=\mathcal{B}^{(2,5)}=\Phi$.
	\[
	\left(
	\begin{array}{ccccccccccccc} 
	0 & 0 & 0 & 0 & 1 & 1 & 0 & 1 & 1 & 0 & 1 & 1 & 0 \\
	0 & 0 & 0 & 0 & 0 & 1 & 1 & 0 & 1 & 1 & 0 & 0 & 0 \\
	\end{array}
	\right).
	\]	
		For the third iteration, $\mathcal{U}^{(3)}=3$, $\mathcal{A}^{(3)}=2$, and hence $\hat{r}_{2}=r_2$ according to ${\bf{FILL}}$. Hence, the second block-row of ${\bf{G}}^{E}$ is given as below. Now $\Psi=\Phi$, and the algorithm terminates.
		\[
		\left(
		\begin{array}{ccccccccccccc} 
		0 & 0 & 0 & 0 & 0 & 0 & 0 & 1 & 1 & 0 & 1 & 1 & 1 \\
		0 & 0 & 0 & 0 & 0 & 0 & 0 & 0 & 1 & 1 & 0 & 0 & 0 \\
		\end{array}
		\right).
		\]	
     Hence, the overall encoding matrix ${\bf{G}}^{E}$ is given as below.
     		\[{\bf{G}}^{E} =
     		\left(
     		\begin{array}{cccc|ccc|ccc|cc|c} 
     		0 & 0 & 0 & 0 & 1 & 1 & 0 & 1 & 1 & 0 & 1 & 1 & 0 \\
     		0 & 0 & 0 & 0 & 0 & 1 & 1 & 0 & 1 & 1 & 0 & 0 & 0 \\
     		\hline
   			0 & 0 & 0 & 0 & 0 & 0 & 0 & 1 & 1 & 0 & 1 & 1 & 1 \\
     		0 & 0 & 0 & 0 & 0 & 0 & 0 & 0 & 1 & 1 & 0 & 0 & 0 \\
     		\hline
     		1 & 1 & 0 & 0 & 0 & 0 & 0 & 0 & 0 & 0 & 1 & 1 & 0 \\
     		0 & 1 & 1 & 0 & 0 & 0 & 0 & 0 & 0 & 0 & 0 & 0 & 0 \\
     		0 & 0 & 1 & 1 & 0 & 0 & 0 & 0 & 0 & 0 & 0 & 0 & 0 \\
     		\end{array}
     		\right).
     		\]
     It can be easily verified that all the receivers are able to decode their demands from the code obtained using ${\bf{G}}^{E}$.		
	\label{exmp8}
\end{exmp}

We use the following lemma which was stated in \cite{PK}, as a necessary and sufficient condition for a given  matrix to be an encoding matrix for the given index coding problem.

\begin{lem}[Lemma 1, \cite{PK}]
	For an index coding problem $\mathcal{I}$ (groupcast or single unicast) with $n \times m$ fitting matrix ${\bf{F}}_x$, a matrix ${\bf{G}} \in \mathbb{F}^{r \times m}_q$ is an encoding matrix iff there exists a matrix ${\bf{D}} \in \mathbb{F}^{n \times r}_q$ such that ${\bf{D}}{\bf{G}}$ completes ${\bf{F}}_x$, i.e. ${\bf{D}}{\bf{G}} \approx  {\bf{F}}_x$.\\
	\label{dgmatrix}
\end{lem}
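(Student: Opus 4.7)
The plan is to prove both implications by reinterpreting the $j$th row of ${\bf{D}}$ as the linear decoding coefficients applied by receiver $j$ to the broadcast ${\bf{G}}{\bf{x}}$. The key observation is that the $j$th row of ${\bf{F}}_x$ has a $1$ in column $f(j)$, a $0$ in every column indexing a message that is neither wanted nor in $\mathcal{K}_j$, and an unknown $x$ in every column indexing a side-information message; so ${\bf{d}}_j^T{\bf{G}}$ being a completion of this row is exactly the statement that ${\bf{d}}_j^T{\bf{G}}{\bf{x}}$ equals ${\bf{x}}_{f(j)}$ plus some linear combination of $\{{\bf{x}}_k:{\bf{x}}_k \in \mathcal{K}_j\}$.

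For the ``if'' direction, I would assume ${\bf{D}}{\bf{G}}\approx{\bf{F}}_x$ and construct an explicit linear decoder for each receiver. Given the received vector ${\bf{G}}{\bf{x}}$, receiver $j$ computes ${\bf{d}}_j^T{\bf{G}}{\bf{x}}$ and, using its side-information, subtracts the contribution of the $\mathcal{K}_j$-indexed messages (whose coefficients are read off directly from ${\bf{d}}_j^T{\bf{G}}$). The residual is ${\bf{x}}_{f(j)}$, proving ${\bf{G}}$ is a valid encoding matrix.

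For the harder ``only if'' direction, I would not assume the decoder at receiver $j$ is linear, but I would extract a linear one using a uniqueness argument. Because ${\bf{x}}$ is an arbitrary element of $\mathbb{F}_q^m$, decodability at receiver $j$ forces the following: whenever ${\bf{x}},{\bf{x}}'\in\mathbb{F}_q^m$ agree on the coordinates indexed by $\mathcal{K}_j$ and satisfy ${\bf{G}}{\bf{x}}={\bf{G}}{\bf{x}}'$, one must have $x_{f(j)}=x'_{f(j)}$. Setting ${\bf{z}}={\bf{x}}-{\bf{x}}'$, this says $z_{f(j)}=0$ whenever ${\bf{G}}{\bf{z}}=0$ and $z_k=0$ for all $k\in\mathcal{K}_j$. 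Dualising this linear-algebraic condition (via, e.g., taking orthogonal complements), the standard basis vector ${\bf{e}}_{f(j)}^T$ must lie in the sum of the row span of ${\bf{G}}$ and the span of $\{{\bf{e}}_k^T:{\bf{x}}_k\in\mathcal{K}_j\}$. Hence there exists ${\bf{d}}_j$ such that ${\bf{d}}_j^T{\bf{G}}-{\bf{e}}_{f(j)}^T$ is supported on $\mathcal{K}_j$, which is precisely the condition that ${\bf{d}}_j^T{\bf{G}}$ completes the $j$th row of ${\bf{F}}_x$. Stacking these row vectors $({\bf{d}}_j^T)_{j\in[n]}$ yields a matrix ${\bf{D}}\in\mathbb{F}_q^{n\times r}$ with ${\bf{D}}{\bf{G}}\approx{\bf{F}}_x$.

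The main obstacle is the forward direction, where the decoding function is allowed to be arbitrary while the conclusion is linear; the crucial step is the duality step that turns the uniqueness-of-decoding statement into membership of a standard basis vector in a particular subspace. Everything else is a bookkeeping exercise of matching the entries of ${\bf{d}}_j^T{\bf{G}}$ against the structural entries $0$, $1$, and $x$ of the fitting matrix, for which the definition of $\approx$ is directly applicable.
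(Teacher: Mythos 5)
Your proof is correct and complete; in particular, the crux of the ``only if'' direction is handled properly, since your dualisation step amounts to the fact that a linear functional vanishing on the kernel of the matrix obtained by stacking ${\bf G}$ on top of the rows $\{{\bf e}_k^T : {\bf x}_k \in \mathcal{K}_j\}$ must lie in that matrix's row space, which is valid over any field and correctly converts decodability under \emph{arbitrary} (possibly nonlinear) decoders into the existence of the linear row ${\bf d}_j^T$. Note that the paper itself gives no proof of this lemma (it is imported from \cite{PK}), and your argument coincides with the standard one underlying that citation, so there is no divergence to report.
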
 

It can be easily observed that the indices of the non-zero entries of  the $i$th row of ${\bf{D}}$ are same as the indices of the code symbols that must be used by the $i$th receiver to decode its demanded message, $i \in [n]$. Hence, we call the matrix ${\bf{D}}$ as an associated decoding matrix for the index code given by the encoding matrix ${\bf{G}}$. Note that for a given encoding matrix, there need not exist a unique  associated decoding matrix, but many associated decoding matrices can exist. We now use Lemma \ref{dgmatrix} to prove the correctness of Algorithm $1$.

\begin{equation}
{\bf{G}}^{E}=
\left(
\begin{array}{c|c|c|c} 
{\bf{G}}^{B}_{1,1}{\bf{\hat{G}}}^{(1,1)} & \cdots & \cdots & {\bf{G}}^{B}_{1,m_B}{\bf{\hat{G}}}^{(1,m_B)} \\
\hline
{\bf{G}}^{B}_{2,1}{\bf{\hat{G}}}^{(2,1)}  & \cdots & \cdots & {\bf{G}}^{B}_{2,m_B}{\bf{\hat{G}}}^{(2,m_B)} \\
\hline
\cdots & \ldots & \ldots & \cdots \\
\hline
{\bf{G}}^{B}_{r_B,1}{\bf{\hat{G}}}^{(r_B,1)} & \cdots & \cdots & {\bf{G}}^{B}_{r_B,m_B}{\bf{\hat{G}}}^{(r_B,m_B)} \\
\end{array}
\right).
\label{extGmat}
\end{equation}
\begin{equation}
{\bf{D}}^{E}=
\left(
\begin{array}{c|c|c|c} 
{\bf{D}}^{B}_{1,1}{\bf{\hat{D}}}^{(1,1)} & \cdots & \cdots & {\bf{D}}^{B}_{1,r_B}{\bf{\hat{D}}}^{(1,r_B)} \\
\hline
{\bf{D}}^{B}_{2,1}{\bf{\hat{D}}}^{(2,1)}  & \cdots & \cdots & {\bf{D}}^{B}_{2,r_B}{\bf{\hat{D}}}^{(2,r_B)} \\
\hline
\cdots & \ldots & \ldots & \cdots \\
\hline
{\bf{D}}^{B}_{n_B,1}{\bf{\hat{D}}}^{(n_B,1)} & \cdots & \cdots & {\bf{D}}^{B}_{n_B,r_B}{\bf{\hat{D}}}^{(n_B,r_B)} \\
\end{array}
\right).
\label{extDmat}
\end{equation}

\begin{thm}
	For any problem $\mathcal{I}_E(\mathcal{I}_B;(\mathcal{I}_i)_{i \in [m_B]})$, the matrix ${\bf{G}}^{E}$ obtained using Algorithm \ref{algo1} is a valid encoding matrix.
	\label{lemexplbase}
\end{thm}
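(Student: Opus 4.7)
The plan is to apply Lemma \ref{dgmatrix}, which reduces the task to exhibiting a matrix $\mathbf{D}^E$ such that $\mathbf{D}^E\mathbf{G}^E \approx \mathbf{F}_x^E$. I will propose $\mathbf{D}^E$ in the block form of (\ref{extDmat}). For each block-row index $i\in[n_B]$, let $j_i\in[m_B]$ be the unique column such that $(\mathbf{F}_x^B)_{i,j_i}=1$, so that the $i$th block-row of $\mathbf{F}_x^E$ consists of $n_{j_i}$ rows corresponding to the receivers of $\mathcal{I}_{j_i}$. Fix an associated decoding matrix $\mathbf{D}^{(j_i)}\in\mathbb{F}_q^{n_{j_i}\times r_{j_i}}$ for the given code $\mathbf{G}^{(j_i)}$. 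Define $\hat{\mathbf{D}}^{(i,k)}$ from $\mathbf{D}^{(j_i)}$ by either appending $\hat{r}_k - r_{j_i}$ all-zero columns on the right (when $\hat{r}_k\ge r_{j_i}$) or keeping the first $\hat{r}_k$ columns (when $\hat{r}_k<r_{j_i}$). Note that these choices are dimensionally consistent with the block structure in (\ref{extDmat}) and match the $\hat{\mathbf{G}}^{(k,j)}$'s produced by Algorithm \ref{algo1}.

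The key computation is the $(i,j)$ block of the product, which by block multiplication equals
\begin{equation*}
\bigl[[\mathbf{D}^E\mathbf{G}^E]\bigr]_{i,j} \;=\; \sum_{k=1}^{r_B}\mathbf{D}^B_{i,k}\mathbf{G}^B_{k,j}\,\hat{\mathbf{D}}^{(i,k)}\hat{\mathbf{G}}^{(k,j)}.
\end{equation*}
Because $\mathbf{D}^B\mathbf{G}^B\approx\mathbf{F}_x^B$, the scalar coefficients $c_{ij}=\sum_k\mathbf{D}^B_{i,k}\mathbf{G}^B_{k,j}$ already reproduce $(\mathbf{F}_x^B)_{i,j}$ whenever this entry is $0$ or $1$. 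The proof then splits on $(\mathbf{F}_x^B)_{i,j}$. If $(\mathbf{F}_x^B)_{i,j}=x$, any completion is valid. If $(\mathbf{F}_x^B)_{i,j}=1$, then $j=j_i$ and Algorithm \ref{algo1} ensures that each $k$ with $\mathbf{D}^B_{i,k}\mathbf{G}^B_{k,j}\neq 0$ has $\hat{r}_k\ge r_{j}=r_{j_i}$ (this is exactly what the $\max(r_{\sigma(t)},\ldots)$ rule in Line 2 of \textbf{FILL} enforces when the block-row is first assigned to a sub-problem through $\mathcal{A}^{(t)}$). In that regime both $\hat{\mathbf{D}}^{(i,k)}$ and $\hat{\mathbf{G}}^{(k,j)}$ are merely $\mathbf{D}^{(j)}$ and $\mathbf{G}^{(j)}$ zero-padded in compatible positions, so $\hat{\mathbf{D}}^{(i,k)}\hat{\mathbf{G}}^{(k,j)}=\mathbf{D}^{(j)}\mathbf{G}^{(j)}$. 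Pulling this common factor out gives $c_{ij}\mathbf{D}^{(j)}\mathbf{G}^{(j)} = \mathbf{D}^{(j)}\mathbf{G}^{(j)} \approx \mathbf{F}_x^{(j)}$, exactly the required block.

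The genuine obstacle is the case $(\mathbf{F}_x^B)_{i,j}=0$, where $c_{ij}=0$ but the block sum involves several nonzero matrix terms that must collectively cancel to the $n_{j_i}\times m_j$ zero matrix. The bookkeeping sets $\mathcal{B}^{(t,j)}$ (populated in Lines 14--25 of Algorithm \ref{algo1}) are introduced precisely to record which block-row indices $k$ will later contribute to such a zero block, and the rule in Line 27 together with the $\max$ in Line 2 of \textbf{FILL} ensure that every contributing $\hat{r}_k$ satisfies $\hat{r}_k\ge\max(r_{j_i},r_j)$. Under this size condition, as in Case 1 above, each $\hat{\mathbf{D}}^{(i,k)}\hat{\mathbf{G}}^{(k,j)}$ reduces to the same matrix $\mathbf{D}^{(j_i)}\mathbf{G}^{(j)}$, independent of $k$. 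Factoring this common matrix out of the sum yields $c_{ij}\,\mathbf{D}^{(j_i)}\mathbf{G}^{(j)}=\mathbf{0}$ because $c_{ij}=0$. Carefully auditing Algorithm \ref{algo1} to confirm that every such contributing $k$ is indeed accounted for in some $\mathcal{B}^{(t',k)}$ before the iteration that decides $\hat{r}_k$, and hence that the $\max$ in Line 2 picks up the necessary lower bound, will be the most delicate bookkeeping step; once that is established, Lemma \ref{dgmatrix} finishes the argument.
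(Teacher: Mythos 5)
Your overall architecture is exactly the paper's: the paper proves Theorem \ref{lemexplbase} by exhibiting precisely the block decoding matrix of (\ref{extDmat}), with $\hat{\mathbf{D}}^{(i,k)}$ obtained from $\mathbf{D}^{(f(i))}$ by column truncation or zero-padding against $\hat{r}_k$, expanding the $(i,j)$th block of $\mathbf{D}^E\mathbf{G}^E$ as $\sum_{k\in[r_B]}\mathbf{D}^B_{i,k}\mathbf{G}^B_{k,j}\,\hat{\mathbf{D}}^{(i,k)}\hat{\mathbf{G}}^{(k,j)}$, and factoring a $k$-independent matrix out of the sum so that $\mathbf{D}^B\mathbf{G}^B\approx\mathbf{F}^B_x$ settles both the $1$-blocks and the $0$-blocks via Lemma \ref{dgmatrix}. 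Your treatment of the case $(\mathbf{F}^B_x)_{i,j}=1$ is sound (with the minor omission that contributing $k$ already removed from $\Psi$ in earlier iterations also satisfy $\hat{r}_k\geq r_{\sigma(t')}\geq r_{\sigma(t)}$ because $\sigma$ orders the codelengths non-increasingly).

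The genuine gap is in your zero-block case: the inequality you propose to audit is false, so the audit you defer cannot succeed as stated. It is not true that every contributing $k$ gets $\hat{r}_k\geq\max(r_{j_i},r_j)$. Look at Lines 18--22 of Algorithm \ref{algo1}: when the zero entry lies in a column $v$ with $r_v<r_{\sigma(t)}$ (i.e.\ $r_j<r_{j_i}$), the contributing indices are recorded in $\mathcal{B}^{(t,v)}$, so the $\max$ in Line 2 of \textbf{FILL} only guarantees $\hat{r}_k\geq r_v=\min(r_{j_i},r_j)$; such a $k$ may later be assigned through $\mathcal{A}^{(t'')}$ to a sub-problem with a short code and end up with $\hat{r}_k<r_{j_i}$. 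Relatedly, your claimed common factor $\mathbf{D}^{(j_i)}\mathbf{G}^{(j)}$ is dimensionally incoherent unless $r_{j_i}=r_j$, since $\mathbf{D}^{(j_i)}$ has $r_{j_i}$ columns and $\mathbf{G}^{(j)}$ has $r_j$ rows. The repair, which is what the paper's proof actually uses, is that the weaker guarantee suffices: writing the product by columns, $\hat{\mathbf{D}}^{(i,k)}\hat{\mathbf{G}}^{(k,j)}=\sum_{c=1}^{p_k}(\text{column } c \text{ of } \mathbf{D}^{(j_i)})(\text{row } c \text{ of } \mathbf{G}^{(j)})$ with $p_k=\min(\hat{r}_k,r_{j_i},r_j)$, and once $\hat{r}_k\geq\min(r_{j_i},r_j)$ --- which the bookkeeping \emph{does} ensure for all three kinds of contributing $k$: those with $k\notin\Psi$ were filled earlier at depth $\geq r_{\sigma(t)}$, those in $\mathcal{B}^{(t,v)}$ get depth $\geq r_v$, and those in $\mathcal{B}^{(t,\sigma(t))}$ get depth $\geq r_{\sigma(t)}$ --- the truncated product is the same matrix $\mathbf{B}$ for every contributing $k$, so the block equals $\big(\sum_k\mathbf{D}^B_{i,k}\mathbf{G}^B_{k,j}\big)\mathbf{B}=0\cdot\mathbf{B}=\mathbf{0}$. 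Replacing your $\max$ condition by this $\min$ condition and the common factor by the truncated product closes the gap along the paper's own lines.
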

\begin{proof}
	To prove this theorem, we construct a matrix ${\bf{D}}^{E}$ using $n_B \times r_B$ block matrices as in (\ref{extDmat}). It is obtained by using corresponding decoding matrices $\{{\bf{D}}^{(i)}\}_{i \in [m_B]}$ of the encoding matrices $\{{\bf{G}}^{(i)}\}_{i \in [m_B]}$, and the decoding matrix ${\bf{D}}^{B}$ employed in Algorithm \ref{algo1}. We then show that ${\bf{D}}^{E}$ and ${\bf{G}}^{E}$ are a valid pair of decoding and encoding matrices using Lemma \ref{dgmatrix}. The $i$th decoding matrix ${\bf{D}}^{(i)}$ is of size $n_i \times r_i$, $i \in [m_B]$.	
	
	The $(i,j)$th block matrix of ${\bf{D}}^{E}$ be given by  ${\bf{D}}^{B}_{i,j}{\bf{\hat{D}}}^{(i,j)}$, $i \in [n_B]$, $j \in [r_B]$, where ${\bf{\hat{D}}}^{(i,j)}$ is given by 
	\[ 
	{\bf{\hat{D}}}^{(i,j)}=
	\begin{cases}
	({\bf{D}}^{(f(i))})^{[[\hat{r}_j]]} & if \ r_{f(i)} > \hat{r}_j,\\
	({\bf{D}}^{(f(i))} | {\bf{0}}_{n_{f(i)} \times (\hat{r}_{j}-r_{f(i)})})        & \ otherwise.
	\end{cases}
	\]	
	where $f(i)$ is the index of the message demanded by the $i$th receiver and $\hat{r}_j$ is the number of rows in the $j$th block-row of ${\bf{G}}^{E}$ as assigned in ${\bf{FILL}}$. Note that when $r_{f(i)} > \hat{r}_j$, only the first $\hat{r}_j$ columns of ${\bf{D}}^{(f(i))}$ are taken.
	
	We analyze any $i$th block-row of  ${\bf{D}}^{E}{\bf{G}}^{E}$, $i \in [n_B]$, and prove that it completes $i$th block-row of  ${\bf{F}}_x^{E}$. ${\bf{D}}^{E}{\bf{G}}^{E}$ consists of $n_B \times m_B$ block matrices. Any $i$th block-row of ${\bf{D}}^{E}{\bf{G}}^{E}$, $i \in [n_B]$, is given by 
	$({\bf{D}}^{B}_{i,1}{\bf{\hat{D}}}^{(i,1)}|\cdots|\cdots|{\bf{D}}^{B}_{i,r_B}{\bf{\hat{D}}}^{(i,r_B)}){\bf{G}}^{E}$. We first verify that $(i,f(i))$th block matrix of ${\bf{D}}^{E}{\bf{G}}^{E}$ completes ${\bf{F}}_x^{(f(i))}$, and is thus given by  $\underset{j \in [r_B]} {\Sigma} {\bf{D}}^{B}_{i,j}{\bf{\hat{D}}}^{(i,j)}{\bf{G}}^{B}_{j,f(i)}{\bf{\hat{G}}}^{(j,f(i))}$, for any  $i \in [n_B]$. From the construction given in Algorithm $1$ and its explanation, note that the matrices  ${\bf{\hat{G}}}^{(j,f(i))}$ are not obtained by deleting any rows of  ${\bf{G}}^{(f(i))}$, for those $j \in [r_B]$ which contribute to completing ${\bf{F}}^{(f(i))}_x$, $i \in [n_B]$. Hence, for such $j \in [r_B]$, note that ${\bf{\hat{D}}}^{(i,j)}{\bf{\hat{G}}}^{(j,f(i))}={\bf{D}}^{(f(i))}{\bf{G}}^{(f(i))} \approx \mathbb{F}_x^{(f(i))}$. Hence, we have
	$\underset{j \in [r_B]} {\Sigma} {\bf{D}}^{B}_{i,j}{\bf{\hat{D}}}^{(i,j)}{\bf{G}}^{B}_{j,f(i)}{\bf{\hat{G}}}^{(j,f(i))} = {\bf{D}}^{(f(i))}{\bf{G}}^{(f(i))} (\underset{j \in [r_B]} {\Sigma} {\bf{D}}^{B}_{i,j}{\bf{G}}^{B}_{j,f(i)}) = {\bf{D}}^{(f(i))}{\bf{G}}^{(f(i))} \times (1) \approx \mathbb{F}_x^{(f(i))}.$ The last equality follows as ${\bf{D}}^{B}$ is a decoding matrix of ${\bf{G}}^{B}$, and then using Lemma \ref{dgmatrix}.
	
	Observe that there is no need to verify the completion of  $(i,j)$th block matrix of ${\bf{F}}_x^{E}$, if it consists of only $x$'s, for all $i \in [n_B], j \in [m_B]$. Considering the $i$th block-row of  ${\bf{D}}^{E}{\bf{G}}^{E}$, $i \in [n_B]$, we next verify that its $j$th block matrix is a zero matrix of appropriate dimensions (that is $(i,j)$th block matrix of $\mathbb{F}_x^{E}$), for $j \in [m_B]$ such that $(\mathbb{F}^{B}_x)_{i,j}=0$. This is given by $\underset{k \in [r_B]} {\Sigma} {\bf{D}}^{B}_{i,k}{\bf{\hat{D}}}^{(i,k)}{\bf{G}}^{B}_{k,j}{\bf{\hat{G}}}^{(k,j)}$. From the construction of Algorithm $1$, note that there is a possibility that ${\bf{\hat{G}}}^{(k,j)}$ is obtained by deleting the last $r_j-r_{f(i)}$ rows from  ${\bf{G}}^{(j)}$. When this happens, note that all matrices ${\bf{\hat{D}}}^{(i,k)}$, $k \in [r_B]$, only have $r_{f(i)}$ non-zero rows. Hence the matrix ${\bf{\hat{D}}}^{(i,k)}{\bf{\hat{G}}}^{(k,j)}$ is the same, say ${\bf{B}}$ for all values of $k$. Therefore, $\underset{k \in [r_B]} {\Sigma} {\bf{D}}^{B}_{i,k}{\bf{\hat{D}}}^{(i,k)}{\bf{G}}^{B}_{k,j}{\bf{\hat{G}}}^{(k,j)}= {\bf{B}}\underset{k \in [r_B]} {\Sigma} {\bf{D}}^{B}_{i,k}{\bf{G}}^{B}_{k,j}=0\times{\bf{B}}={\bf{0}}$. The remaining case is dealt on similar lines, and we obtain an all-zero matrix. Hence the result.

\end{proof}

\section{Optimality of the constructed codes}

In this section, we provide a necessary condition for the scalar linear optimality of the code constructed using Algorithm $1$, when the codes of the sub-problems and the base problem are also scalar linear optimal. The base problem is assumed to be an SUICP.

An SUICP with $m$ messages is called a cycle $C$ if $\mathcal{K}_i=x_{i+1}$, for $i \in [m-1]$, and $\mathcal{K}_m =x_1$. Any cycle consisting of $m$ messages can save at most one transmission compared to the naive transmission of all messages using the code $\mathcal{C}=\{x_1+x_2, x_2+x_3, \cdots, x_{m-1}+x_m\}$ \cite{BK}, which is said to be a cyclic code \cite{SUOH}. 

 We now have the following theorem which makes use of Lemma \ref{lowbnd} to prove the optimality of the constructed code.
 
\begin{thm}
 	For any problem $\mathcal{I}_E(\mathcal{I}_B;(\mathcal{I}_i)_{i \in [m_B]})$, with the base problem being a cycle, the matrix ${\bf{G}}^{E}$ obtained using Algorithm \ref{algo1} gives an optimal scalar linear code, when all the given component codes are optimal scalar linear, and the given code for the base problem is the cyclic code.
 	\label{thm2}
\end{thm}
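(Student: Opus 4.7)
The plan is to sandwich $mrk_q({\bf{F}}_x^E)$ between Lemma~\ref{lowbnd}'s lower bound and the length of the code returned by Algorithm~\ref{algo1}, and to show that both equal $\sum_{j \in [m_B-1]} r_{\sigma(j)}$.

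For the lower bound, I would apply Lemma~\ref{lowbnd} to the upper-triangulable submatrix of ${\bf{F}}_x^B$ obtained by deleting the row and the column indexed by $\sigma(m_B)$, the sub-problem of smallest rank. For the cycle, the only $1$ in row $i$ and the only $1$ in column $i$ both sit on the diagonal, so any upper-triangulable square submatrix must use equal row and column index sets; the maximal such sets have size $m_B-1$ and correspond to removing a single vertex. Maximising over the deleted vertex yields $\sum_{s \in [m_B]} r_s - r_{\sigma(m_B)} = \sum_{j \in [m_B-1]} r_{\sigma(j)}$.

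For the matching upper bound I trace Algorithm~\ref{algo1} on the cyclic inputs. Here ${\bf{G}}^B$ has nonzero entries only at positions $(k,k)$ and $(k,k{+}1)$ for $k \in [m_B-1]$, and the unique associated ${\bf{D}}^B$ has $D^B_{i,\cdot}=e_i$ for $i \in [m_B-1]$ with an all-ones last row. Substituting, iteration $t$ with $\sigma(t)=p$ yields $\mathcal{U}^{(t)}=\{p\}$ and $\mathcal{A}^{(t)}=\{p\} \cap \Psi$ when $p \le m_B-1$, or $\mathcal{A}^{(t)}=\{m_B-1\} \cap \Psi$ when $p=m_B$. Hence every iteration either fills exactly one block-row of ${\bf{G}}^E$ or is a ``skip''; block-row $m_B-1$ is contested by the two iterations $\pi(m_B-1)$ and $\pi(m_B)$, and the later of them is the unique skip. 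The $\mathcal{B}$-sets are populated only at iteration $T=\pi(m_B)$, because only the last row of ${\bf{D}}^B$ has multiple nonzero entries; for each zero position $(m_B,v)$ of ${\bf{F}}_x^B$ with $v \in \{2,\ldots,m_B-1\}$, the set $\{v-1,v\} \cap \Psi_T$ is placed into $\mathcal{B}^{(T,k)}$ with $r_k = \min(r_v,r_{m_B})$ by the if--else of lines~19--22.

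Case-splitting on whether $\pi(m_B-1) < \pi(m_B)$ or the reverse, I would then show that the $\hat r_i$ returned by FILL realises a bijection between the filled block-rows $i \in [m_B-1]$ and the top ranks $\{r_{\sigma(1)},\ldots,r_{\sigma(m_B-1)}\}$, so that $\sum_{i \in [m_B-1]} \hat r_i = \sum_{j \in [m_B-1]} r_{\sigma(j)}$ and the lower bound is matched. The step I expect to be the main obstacle is verifying this bijection cleanly, i.e.\ that the $\mathcal{B}$-driven pull-ups promote each $\hat r_i$ to precisely the intended top rank without overshooting. A clean way is an inductive invariant maintained at each iteration $t$: the multiset $\{\hat r_i : i \text{ already filled}\}$ coincides with an explicitly tracked sub-multiset of $\{r_{\sigma(1)},\ldots,r_{\sigma(t-1)}\}$ obtained by excluding the rank of the eventually skipped iteration. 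Combining this with the lower bound then closes the proof.
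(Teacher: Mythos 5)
Your lower-bound half coincides exactly with the paper's: the paper likewise applies Lemma~\ref{lowbnd} to the submatrix of ${\bf{F}}^{B}_x$ with row and column sets $[m_B]\setminus \{ind_{min}\}$, and your extra structural remark (that for a cycle any upper-triangulable square submatrix must have equal row and column sets, so the maximal ones delete a single vertex) is correct. For the upper bound, however, the paper gives no trace at all: its proof exhibits ${\bf{D}}^{B}$ and then simply asserts that the constructed length ``can also be easily verified'' to be $\sum_{i\in[m_B]} mrk_q({\bf{F}}^{(i)}_x)-\min_i mrk_q({\bf{F}}^{(i)}_x)$. So the part you work out in detail --- $\mathcal{U}^{(t)}=\{\sigma(t)\}$, $\mathcal{A}^{(t)}=\{\sigma(t)\}\cap\Psi$ or $\{m_B-1\}\cap\Psi$, the single skipped iteration, and the $\mathcal{B}$-sets populated only at $T=\pi(m_B)$ with key of rank $\min(r_v,r_{m_B})$ --- is precisely the step the paper omits, and your trace of it is accurate.

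The obstacle you flag is not just the hard step: it is a genuine gap, and the bijection you hope to prove is false in general. Your own trace yields the closed form $L=\max(r_{m_B-1},r_{m_B})+\sum_{k=1}^{m_B-2}\max\bigl(r_k,\min(r_{k+1},r_{m_B})\bigr)$, since block-row $k\le m_B-2$ is pulled up only through the $\mathcal{B}$-set created at the zero position $(m_B,k+1)$. Take $m_B=4$ and $(r_1,r_2,r_3,r_4)=(1,3,1,2)$: the algorithm assigns $\hat r_2=3$, $\hat r_3=\max(r_3,r_4)=2$, and then $\hat r_1=\max(r_1,\min(r_2,r_4))=2$, so $L=7$, whereas the top three ranks sum to $6$. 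The overshoot cannot be removed by a better invariant, because it is forced by the ansatz (\ref{extGmat})--(\ref{extDmat}): cancelling the sub-problem-$(k+1)$ interference at receiver block $m_B$ requires code-block $k$ to carry at least $\min(r_{k+1},r_{m_B})$ rows of ${\bf{G}}^{(k+1)}$. Moreover, instantiating the four sub-problems as (one message; three messages with no side information; one message; two messages with no side information) produces an extended problem of minrank exactly $6$: Lemma~\ref{lowbnd} gives $\ge 6$, and with messages $a;b_1,b_2,b_3;c;d_1,d_2$ the code $(b_1+c,\,b_2+c,\,b_3+c,\,c+d_1,\,d_1+a,\,d_2+a)$ achieves it (receiver $1$ decodes via $(b_1+c)+(c+d_1)+(d_1+a)$). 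All hypotheses of the theorem hold for this instance, yet the algorithm's output has length $7$ --- so the statement itself fails for this labeling, and the paper's ``easily verified'' length claim is the false step. The identity $L=\sum_i r_i-\min_i r_i$, and hence optimality, holds exactly when $r_{m_B}=\min_i r_i$, i.e.\ when the cyclic code omits the edge-sum incident to a minimum-rank vertex (every worked example in the paper happens to be configured this way); the only repair is to add that hypothesis, or to rotate the labeling of the cycle before running the algorithm --- a step neither your proposal nor the paper's proof takes.
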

\begin{proof}
	Without loss of generality, we can assume that the cycle (the base problem) of $m_B$ messages is such that   $\mathcal{K}_i=x_{i+1}$, for $i \in [m_B-1]$, and $\mathcal{K}_{m_B} =x_1$. It can be easily verified that the following $m_B \times (m_B-1)$ matrix ${\bf{D}}^B$ is a valid decoding matrix for the code given by $\mathcal{C}=\{x_1+x_2, x_2+x_3, \cdots, x_{m-1}+x_m\}$. 
	\[{\bf{D}}^{B}=
	\left(
	\begin{array}{ccccc} 
	1 & 0 & \cdots & \cdots & 0 \\
	0 & 1 & 0 & \cdots & 0 \\
	\vdots & \ddots & \ddots & \ddots & \vdots\\
	0 & \ddots & \ddots & 1 & 0\\	
	0 & 0 & 0 & 0 & 1\\
	\hline
	(-1)^{1-1} & (-1)^{2-1} & \cdots & \cdots &  (-1)^{m_B-2}\\
	\end{array}
	\right).
	\]
    The length of the constructed code for the jointly extended problem can also be easily verified to be $\underset{i \in [m_B]} {\sum} mrk_q({\bf{F}}^{(i)}_x)-min\{mrk_q({\bf{F}}^{(i)}_x), i \in [m_B] \}$. We prove that this is a lower bound on $mrk_q({\bf{F}}^{E}_x)$ using Lemma \ref{lowbnd}. Let $ind_{min}=argmin\{mrk_q({\bf{F}}^{(i)}_x), i \in [m_B]\}$. Consider the submatrix ${\bf{M}}_x$ of ${\bf{F}}^{B}_x$ with $col({\bf{M}}_x,{\bf{F}}^{B}_x)=col({\bf{M}}_x,{\bf{F}}^{B}_x)=[m_B] \setminus ind_{min}$. This is an upper-triangulable submatrix of ${\bf{F}}^{B}_x$. Hence, the lower bound given by Lemma \ref{lowbnd}  is equal to  $\underset{i \in [m_B]} {\sum} mrk_q({\bf{F}}^{(i)}_x)-min\{mrk_q({\bf{F}}^{(i)}_x), i \in [m_B] \}$. Hence the result.
\end{proof} 

We illustrate the theorem with an example.

\begin{exmp} For the jointly extended problem with the fitting matrix given below, we observe that the base problem has an optimal scalar code of length $2$ given by the cyclic code $\mathcal{C}^B=(x_B^1+x_B^2,x_B^2+x_B^3)$. 
	\[
	{\bf{F}}_x^{E}=
	\left(
	\begin{array}{ccc|cc|cc} 
	1 & x & 0 & x & x & 0 & 0 \\
	0 & 1 & x & x & x & 0 & 0 \\
	x & 0 & 1 & x & x & 0 & 0 \\
	\hline
	0 & 0 & 0 & 1 & x & x & x \\
	0 & 0 & 0 & x & 1 & x & x \\
	\hline
	x & x & x & 0 & 0 & 1 & x \\
	x & x & x & 0 & 0 & x & 1 \\
	\end{array}
	\right).
	\]  
	Considering the following optimal component codes given by  $\mathcal{C}^1=(x_1+x_2,x_2+x_3)$, $\mathcal{C}^2=(x_1+x_2)$, and $\mathcal{C}^3=(x_1+x_2)$, we obtain the following optimal code after running the Algorithm $1$ : $\mathcal{C}^E=(x_1+x_2+x_4+x_5,x_2+x_3,x_4+x_5+x_6+x_7)$. It can be easily verified that all the receivers are able to decode their demands. Note also that the lower bound given by Lemma \ref{lowbnd} is also $3$, and hence this is  an optimal code.
\end{exmp}



\section{Conclusion and Future Work}   
A class of joint extensions of a finite number of groupcast ICPs is identified in this paper. A lower bound on the minrank of the jointly extended problem, a code construction algorithm (not necessarily optimal) for the special class of jointly extended problems, and a set of necessary conditions for their optimality are given in terms of those of the base problem and all the sub-problems.

Finding more classes of jointly extended problems, with optimal results being expressible in terms of their component problems is an interesting direction for future work. 

\section*{Acknowledgment}
This work was supported partly by the Science and Engineering Research Board (SERB) of Department of Science and Technology (DST), Government of India, through J.C. Bose National Fellowship to B. S. Rajan.


\begin{thebibliography}{10}
	
\bibitem{BK}
Y.~Birk, and T.~Kol, ``Coding on demand by an informed source ({ISCOD}) for
efficient broadcast of different supplemental data to caching clients,''
\emph{IEEE Transactions on Information Theory}, vol.~52, no.~6, pp.
2825--2830, 2006.

\bibitem{jafar}
S.~A. Jafar, ``Interference alignment - a new look at signal dimensions in a
 communication network,'' \emph{Foundations and Trends in Communications and Information Theory}, vol.~7, pp. 1--136, 2011.
 
\bibitem{SUOH}
L.~Ong, C.~K. Ho, and F.~Lim, ``The single-uniprior index-coding problem: The
single-sender case and the multi-sender extension,'' \emph{IEEE Transactions
	on Information Theory}, vol.~62, no.~6, pp. 3165--3182, 2016.

\bibitem{LNSG}
S. Ghosh, and L. Natarajan, ``Linear codes for broadcasting with noisy side information,'' \emph{arXiv preprint arXiv:1801.02868v1 [cs.IT]}, 9 Jan, 2018.


\bibitem{MV}
M.~B. Vaddi and B.~S. Rajan, ``{Optimal vector linear index codes for some symmetric side information problems}," in \textit{Proc. IEEE International Symposium on Information Theory}, Barcelona, Spain,  2016, pp.125-129.

\bibitem{LS}
E. Lubetzky, and U. Stav, ``Nonlinear index coding outperforming the
linear optimum," \textit{IEEE Transactions on Information Theory}, vol.~55, no.~8, pp.
3544--3551, 2009.

\bibitem{DSC1}
S.~H. Dau, V. Skachek, and Y.~M. Chee, ``Optimal index codes with near-extreme rates,"
\emph{IEEE Transactions on Information Theory}, vol. 60, no. 3, pp. 1515--1527, 2014.

\bibitem{DIC} 
M.~J. Neely, A.~S. Tehrani, and Z. Zhang, ``Dynamic index coding for wireless
	broadcast networks," \textit{IEEE Transactions on Information Theory}, vol. 59, no. 11, pp. 7525--7540, 2013.

\bibitem{ALSW}
N. Alon, A. Hasidim, E. Lubetzky, U. Stav, 
and A. Weinstein, ``Broadcasting with side information," \emph{arXiv preprint arXiv:0806.3245v1 [cs.IT]}, 19 Jun, 2008.


\bibitem{SKS}
J. So, S. Kwak, and Y. Sung, ``{Some new results on index coding when the number of data is less than the number of receivers}", in \emph{Proc. IEEE International Symposium on Information Theory (ISIT)}, Honululu, USA, 2014, pp. 496--500.


\bibitem{RKMV}
R.K. Bhattaram, M.B. Vaddi, and B.S. Rajan, ``A lifting construction for scalar linear index codes," \emph{arXiv preprint arXiv:1510.08592v1 [cs.IT]}, 29 Oct, 2015.

\bibitem{JS}
J.~B. Ebrahimi, and M.~J. Siavoshani, ``On linear index coding from graph homomorphism perspective,'' in \emph{Proc. Information Theory and Applications Workshop (ITA)}, San Diego, USA, 2016.

\bibitem{PK}
V.~K. Gummadi, A. Choudhary, and P. Krishnan, ``{Index coding : Rank-invariant extensions}", \emph{arXiv preprint arXiv:1704.00687v1 [cs.IT]}, 3 Apr, 2017.

\bibitem{CBSR}
C.~Arunachala and B.~S. Rajan, ``Optimal scalar linear index codes for three
classes of two-sender unicast index coding problem,'' presented in \emph{International Symposium of Information Theory and its Applications (ISITA)}, Singapore, Oct, 2018, also available as \emph{arXiv preprint
	arXiv:1804.03823v1 [cs.IT]}, 11 Apr, 2018.

\bibitem{FYGL}
F. Arbabjolfaei and Y.~-H. Kim, ``{Generalized lexicographic products
and the index coding capacity}", \emph{arXiv preprint  arXiv:1608.03689v2 [cs.IT]}, 30 Sep, 2018.

\bibitem{BY}
Z.~Bar-Yossef, Y.~Birk, T.~S.~Jayram, and T.~Kol, ``Index coding with side-
information,'' \emph{IEEE Transactions on Information Theory}, vol. 57, no. 3, pp. 1479-1494, 2011.


\bibitem{DSC}
S.~H. Dau, V. Skachek, and Y.~M. Chee, ``On the security of index
coding with side information,"
\emph{IEEE Transactions on Information Theory
}, vol. 58, no. 6,
pp. 3975--3988, 2012.

\bibitem{MBSRLC}
M.~B. Vaddi and B.~S. Rajan, ``Low-complexity decoding for symmetric, neighboring and consecutive side-information index coding problems," \emph{arXiv preprint arXiv:1705.03192v2 [cs.IT]}, 16 May, 2017.

\end{thebibliography}
\end{document}